\documentclass[10pt,a4paper,DIV=12,parskip=half,numbers=noenddot]{scrartcl}

\usepackage[margin=1in]{geometry}

\usepackage{amsmath,amssymb,amsthm}
\usepackage{mathtools}

\usepackage[T1]{fontenc}
\usepackage{libertinus}
\usepackage[scaled=0.92]{FiraSans}
\usepackage{microtype}

\addtokomafont{disposition}{\sffamily}
\addtokomafont{title}{\sffamily}
\addtokomafont{subtitle}{\sffamily}
\addtokomafont{author}{\sffamily}
\addtokomafont{date}{\sffamily}
\setkomafont{section}{\sffamily\Large\bfseries}
\setkomafont{subsection}{\sffamily\large\bfseries}
\setkomafont{subsubsection}{\sffamily\normalsize\bfseries}

\usepackage{graphicx}
\usepackage{booktabs}
\usepackage{makecell}
\usepackage{float}
\usepackage[section]{placeins}
\usepackage{xcolor}

\usepackage[numbers,sort&compress]{natbib}
\usepackage{hyperref}
\hypersetup{colorlinks=true,linkcolor=blue!50!black,citecolor=blue!50!black,urlcolor=blue!50!black}

\usepackage{algorithm}
\usepackage{algpseudocode}

\usepackage[affil-it]{authblk}

\newcommand{\figureorplaceholder}[2]{%
  \IfFileExists{#1}%
    {\includegraphics[width=#2\linewidth]{#1}}%
    {\fbox{\begin{minipage}[c][0.4\linewidth][c]{#2\linewidth}\centering\sffamily\color{gray}\small Figure not yet placed: \detokenize{#1}\end{minipage}}}%
}

\newcommand{\ebHr}{0.867}
\newcommand{\ebHrLo}{0.709}
\newcommand{\ebHrHi}{1.077}
\newcommand{\ebHrPsig}{89.9}
\newcommand{\ebDrOne}{22.1}
\newcommand{\ebDrOneLo}{-4.0}
\newcommand{\ebDrOneHi}{43.5}
\newcommand{\ebDrOnePos}{94.9}
\newcommand{\ebDrTwo}{34.4}
\newcommand{\ebDrTwoLo}{-10.6}
\newcommand{\ebDrTwoHi}{79.5}
\newcommand{\ebDrTwoPos}{92.9}
\newcommand{\ebStwlv}{0.393}
\newcommand{\ebStwlvLo}{0.301}
\newcommand{\ebStwlvHi}{0.430}
\newcommand{\ebStwfr}{0.116}
\newcommand{\ebStwfrLo}{0.060}
\newcommand{\ebStwfrHi}{0.134}
\newcommand{\ebMedOsRwDays}{284}
\newcommand{\ebMedOsRwDaysLo}{248}
\newcommand{\ebMedOsRwDaysHi}{309}
\newcommand{\ebMedOsRwMos}{9.3}
\newcommand{\ebMpactPredDays}{257}
\newcommand{\ebProdigePredDays}{337}
\newcommand{\ebShiftDays}{27}
\newcommand{\ebGapDays}{80}
\newcommand{\ebTreatDays}{52}
\newcommand{\ebNativeGapDays}{79}

\theoremstyle{plain}
\newtheorem{theorem}{Theorem}[section]
\newtheorem{proposition}[theorem]{Proposition}

\theoremstyle{definition}

\theoremstyle{remark}
\newtheorem{remark}[theorem]{Remark}

\newcommand{\Y}{\mathcal{Y}}
\newcommand{\X}{\mathcal{X}}

\newcommand{\Qref}{Q_{0}}
\newcommand{\Qstar}{Q^{\ast}}
\newcommand{\R}{\mathbb{R}}
\newcommand{\E}{\mathbb{E}}

\newcommand{\KL}{D_{\mathrm{KL}}}
\newcommand{\Kref}{K_{0}}
\newcommand{\Kstar}{K^{\ast}}

\newcommand{\ind}[1]{\mathbf{1}\!\left\{#1\right\}}
\DeclareMathOperator*{\argmin}{arg\,min}

\renewcommand{\mid}{\mathchoice{\,\vert\,}{\,\vert\,}{\,\vert\,}{\,\vert\,}}

\makeatletter
\let\@stdexp\exp
\renewcommand{\exp}{\@stdexp\,}
\makeatother

\begin{document}

\title{FRESH: Information-Geometric Calibration\\of Patient-Level Models to Aggregate Evidence}

\author{F.~Fuller}
\author{D.~Bertolini}
\author{S.~Liang}
\author{J.~Christopher}
\author{A.~Smith}
\affil{Unlearn.AI, Inc., San Francisco, CA, USA}

\date{\today}

\maketitle

\begin{abstract}
This note introduces \textsc{FRESH} (Fusion of Recent Evidence and Subject Histories), a method for incorporating population-level summary results --- published clinical trials, registry summaries, prior natural-history studies, and peer-reviewed indirect comparisons --- into predictive models trained on patient-level data.  This method provides a principled means of combining both patient-level and aggregate-level data types into a unified data-efficient model for clinical decision making.

\textsc{FRESH} assumes access to a generative model trained on patient-level data sources (e.g. clinical trial or real-world data).  The method produces patient-level predictions from a re-calibrated model that matches a set of specified aggregate statistics for a target population.  This can be understood as a patient-level recapitulation of the aggregate source --- with the key property that the recalibration is a minimal perturbation of the original joint distribution in a specific information-geometric sense.  The resulting samples can be analyzed directly or combined into a post-training procedure to update the original generative model.

This approach enables several applications where rigorously incorporating patient level data with summary information is valuable, including:
\begin{itemize}
    \item contextualizing single arm trial results with respect to recent standard-of-care,
    \item clinical trial simulations for design and probability-of-technical success estimation,
    \item comparative effectiveness analyses of on-market therapies.
\end{itemize}

This document first develops the theory and details of the FRESH algorithm.  It then demonstrates a key application in simulating a head-to-head comparison in metastatic pancreatic cancer --- comparing two first-line standards of care with overlapping labels but no existing randomized head-to-head clinical trial.  Here \textsc{FRESH} is used to decompose the published cross-trial overall-survival differences into patient baseline versus treatment-effect contributions, elucidating how much the observed gap is attributable to the patient population differences versus the different therapies themselves.
\end{abstract}

\tableofcontents

\section{Introduction and motivation}
\label{sec:intro}

Many decision in clinical science and epidemiology --- estimating probability of technical success for a clinical trial, assessing comparative effectiveness of two therapies, imputing a placebo effect onto natural history data --- rely on combining sources of information about a clinical cohort that comes from different kinds of studies.  Specifically we contrast patient-level sources that provide granular pictures of individual disease course (clinical trial, registries, or electronic health records) with aggregate sources such as published clinical trial results and the TFLs (tables figures and listings).

One strategy for combining aggregate with patient-level data sources is to bring each into a common format for a unified analysis.  If one wants to maintain the analytic flexibility of patient-level data, then a natural solution is to convert the aggregate data information into a simulated patient-level dataset that recapitulate those aggregate statistics.  This is an under-determined inverse problem in that there are many such datasets, and it cannot be well specified without further constraints.

\textsc{FRESH} (Fusion of Recent Evidence with Subject Histories) provides a well-defined method for solving this  problem, and therefore providing maximal analytic flexibility.  The approach frames the problem of combining patient-level data with aggregate results as a model calibration procedure.  The patient-level source data is represented as a generative model that summarizes the dataset.  The aggregate information is then incorporated as a calibration that adapts the model to respect the aggregate constraints according to a minimal perturbation from the original model.

Among its strengths, FRESH is highly flexible and imposes few requirements on the generative model and aggregate constraint inputs. The only modeling input \textsc{FRESH} requires is a generative model of the joint distribution of baseline covariates $X$ and outcomes $Y$ --- that is, a mechanism for drawing samples $(X,Y)\sim p(X)\,p(Y\mid X)$ from the joint distribution, together with the ability to evaluate the conditional density $p(y\mid x)$. The form and provenance of that generative model are unconstrained: it could be a parametric survival model, a graphical model, or a neural simulator. The aggregate evidence source is generic; it is most often a published randomized controlled trial, but the construction depends only on (a) a description of the target population (eligibility criteria and a table of baseline characteristics) and (b) a list of outcome summary statistics under the regimen of interest. We will refer to ``the source'' below without committing to its provenance.

\textsc{FRESH} comprises two stages, both of which are information projections (\S\ref{sec:info-geom}) in the sense of Csisz\'ar~\cite{csiszar1975idivergence,csiszar1984sanov} and more specifically versions of entropy balancing (\S\ref{sec:eb})~\cite{hainmueller2012entropy,zubizarreta2015stable,wang2020minimal,chan2016calibration}. \emph{Stage~1} constructs a baseline distribution $\Qstar(x)$ --- a surrogate for the inaccessible true source baseline marginal --- as a weighted empirical measure on samples from the generative model, by entropy balancing against the eligibility set and the source's reported baseline statistics.  Each baseline statistic can be enforced as either a hard constraint (exact matching) or a soft constraint (approximate matching under penalty). \emph{Stage~2} perturbs the conditional kernel to $\Kstar(y\mid x)$, the conditional surrogate, so that the induced outcome marginal under $\Qstar$ matches the source's outcome statistics. The two stages compose: Stage 1's weights are passed to Stage 2 unchanged.

The contributions of this note are as follows. First, we precisely define both stages of \textsc{FRESH} as $I$-projections. Second, we present closed-form characterizations of the solutions to the $I$-projections. Third, we show how to exploit the algebraic structure of the projection in Stage 2 to build a Markov Chain Monte Carlo (MCMC) sampler that never evaluates the reference density. and explore its requirements, advantages and limitations. And fourth, demonstrate the utility of \textsc{FRESH} by simulating a head-to-head comparison of two first-line therapy regimens in metastatic pancreatic cancer.

Note that this simulation is intended as a methods demonstration on a real and clinically relevant question, not a fully credentialed ITC of the two therapy regimens. A complete comparative effectiveness analysis would require a fuller exposition of the underlying generative model --- training corpus, architecture, goodness-of-fit --- which we leave to a separate publication.

This note is organized as follows. In \S\ref{sec:setup} we precisely define the problem \textsc{FRESH} solves. In \S\ref{sec:info-geom} and \S\ref{sec:two-iprojs} we describe \textsc{FRESH} using the mathematics of information projections, and in \S\ref{sec:tilt-form} show that the solutions to the projections introduce an exponential tilt factor to the generative distribution. In \S\ref{sec:empirical} we reformulate the problem and solution in terms of empirical samples from the generative model, MCMC sampling from $\Kstar$ without evaluating the model, and Lagrange multipliers. In \S\ref{sec:constraints} we provide concrete examples of constraints to meet typical needs, and summarize the combined algorithm in \S\ref{sec:algorithm}.The procedure is then exercised in \S\ref{sec:case-study} demonstrating a head-to-head simulation of two first-line therapy regimens in metastatic pancreatic cancer.  These therapies were established individually in separate pivotal Phase III trials, but there is no direct head-to-head trial comparing them on common populations.  Here \textsc{FRESH} is used to produce patient-level simulations of each arm, and then --- by another application of entropy balancing --- to decompose the published naive treatment effect between the two regimens into a baseline-adjustment component and a residual treatment-effect component.  This serves to elucidate the comparative effectiveness of one therapy regimen over the other, a thorny problem given the mismatched baseline distributions of these two trials. And we conclude our note with a discussion of the method in \S\ref{sec:discussion} including the application of \textsc{FRESH} within the broader topic of indirect comparison of treatment effects (ITC) across trials with non-overlapping populations --- a well-developed area of clinical-trial methodology that includes Matching-Adjusted Indirect Comparison (MAIC;~\cite{signorovitch2010maic,phillippo2016tsd,phillippo2018maic}), Simulated Treatment Comparison (STC;~\cite{caro2010stc}), and Multi-Level Network Meta-Regression (ML-NMR;~\cite{phillippo2020mlnmr}), among others. The relationship between \textsc{FRESH} and these methods is unpacked in \S\ref{sec:maic-comparison}.

\section{Setup and problem statement}
\label{sec:setup}

Throughout this note we follow the convention that capital letters $X,Y$ denote random variables and lowercase letters $x,y$ denote their realizations; densities are written with respect to a dominating measure (typically Lebesgue on $\R^d$ or counting measure on a finite outcome support), and we use the same symbol for a kernel and its density. Outcomes are taken to be scalar (e.g.\ overall survival time) for notational simplicity, but the construction extends to vector-valued $Y$ unchanged.

\subsection{Generative model}
A generative model of the joint distribution of $X\in\X$ and $Y\in\Y$ is given, factored as
\begin{equation}
p(y,x) \;=\; \Qref(x)\,\Kref(y\mid x),\qquad \Qref(x):=p(x),\quad \Kref(y\mid x):=p(y\mid x).
\label{eq:gen-factor}
\end{equation}
The symbols $\Qref$ and $\Kref$ name the marginal and the conditional of the same generative density $p$; the only purpose of the renaming is to avoid repeated overloading of the bare symbol $p(\cdot)$ as we transform the marginal in Stage~1 and the conditional in Stage~2. We assume three interfaces: a sampler from the marginal $\Qref$; a sampler from the conditional $y\sim\Kref(\cdot\mid x)$; and density evaluation $y\mapsto\Kref(y\mid x)$ at any $x$. No further structure on $\Qref$ or $\Kref$ is assumed; either may be discrete or continuous, parametric or not.

\subsection{Aggregate evidence source}
The source represents the results of some experiment linking a baseline distribution of covariates to a distribution of outcomes.  Following the same factorization, we denote the true joint distribution associated with the source as $Q(X)K(Y\mid X)$, which is not directly observed.
The aggregate evidence source supplies three pieces of information about this distribution.
\begin{enumerate}
\item An eligibility set $\X_{\textsc{ie}}\subseteq\X$ (inclusion / exclusion criteria) defining the population to which its statistics refer.
\item Baseline summary statistics
\begin{equation}
    \E_{X\sim Q}[\phi_k(X)] \;=\; b_k,\qquad k=1,\dots,K,
\end{equation}
represented by functions $\{\phi_k(X)\}$, typically means, quantiles, or proportions of clinico-genomic features. A subset $\mathcal{H}\subseteq\{1,\dots,K\}$ of these is to be matched exactly (\emph{hard} constraints, e.g.\ critical prognostic factors); the remainder $\mathcal{S}=\{1,\dots,K\}\setminus\mathcal{H}$ is to be matched approximately under a quadratic penalty (\emph{soft} constraints).
\item Outcome summary statistics
\begin{equation}
\E_{X\sim Q,\,Y\sim K(\cdot\mid X)}\!\left[f_j(Y)\right] \;=\; c_j,\qquad j=1,\dots,J,
\label{eq:constraints}
\end{equation}
represented by functions $\{f_j(y)\}$, typically medians, landmark survival probabilities, or means, possibly subgroup-restricted.
Subgroup-scoped constraints are accommodated by replacing $f_j(y)$ with $\ind{x\in\X_j}\,f_j(y)$ and adjusting the right-hand side correspondingly; we suppress this for clarity below.
\end{enumerate}

All summary statistics admitted by the framework, both baseline (item~2) and outcome (item~3), are required to be \emph{representable} in the sense above: each one must be expressible as $\E[\phi(\cdot)]$ for some function $\phi$ of the relevant variable. Means and proportions are already of this form; Quantile functionals admit, for a target $p_j$-th quantile equal to $n_j$, the indicator
\begin{equation}
f_j(y) \;=\; \ind{y\le n_j},\qquad c_j \;=\; p_j;
\label{eq:hardquantile}
\end{equation}
Variance, inter-quartile range (IQR), and other composite statistics fit by enforcing each underlying moment as a separate $\phi_k$.  Statistics that are non-linear functionals (e.g.\ hazard ratios, KL divergences, Wasserstein distances) are \emph{not} directly admissible in this form. The precise regularity condition each $\phi$ must satisfy --- enough for the existence and uniqueness arguments in the sequel --- is the standard moment-generating regularity of Csisz\'ar's $I$-projection theorem~\cite{csiszar1975idivergence}.

\subsection{Goal}
The true source pair $Q(X)\,K(Y\mid X)$ defined in \S\ref{sec:setup} is not directly observed; only its summary statistics $\{b_k\}$ and $\{c_j\}$ are available. Our goal is to construct a surrogate factored distribution $\Qstar(x)\,\Kstar(y\mid x)$ (or an ensemble of samples therefrom) that is consistent with those summaries and is otherwise minimally perturbed from the generative model $\Qref(x)\,\Kref(y\mid x)$. Specifically, $\Qstar$ and $\Kstar$ should satisfy
\begin{itemize}
\item the baseline marginal $\Qstar$ is supported in $\X_{\textsc{ie}}$ and matches the baseline statistics in the hard/soft sense above, and is closest to the generative marginal $\Qref$ in $I$-divergence among such measures, and
\item the conditional kernel $\Kstar(y\mid x)$ produces outcome summaries~\eqref{eq:constraints} under $\Qstar$ and is closest to $\Kref$ in $\Qstar$-weighted conditional $I$-divergence among kernels with that property.
\end{itemize}
The two requirements are met sequentially in Stages 1 and 2 below. We use the asterisked notation $\Qstar, \Kstar$ throughout to distinguish FRESH's constructed surrogates from the inaccessible true source distributions $Q, K$; expectations under $Q$ or $K$ are population-level quantities reported by the source publication, while expectations under $\Qstar$ or $\Kstar$ are the corresponding quantities computed from FRESH's outputs.

\section{Information projections}
\label{sec:info-geom}

The mathematical core of \textsc{FRESH} is a single object: the \emph{information projection} ($I$-projection) of a reference probability measure onto a moment-constraint set. For a reference $\mu_0$ on a convex domain $\Omega\subset\R^N$, a list of constraint statistics $\{T_j\}_{j=1}^J$ with target values $\{c_j\}$ partitioned into a \emph{hard} index set $\mathcal{H}$ (matched exactly) and a \emph{soft} index set $\mathcal{S}$ (matched approximately under a quadratic penalty with weights $\rho_j>0$), the $I$-projection is
\begin{equation}
\mu^\ast \;=\; \argmin_{\mu}\;\KL(\mu\|\mu_0) \;+\; \tfrac{1}{2}\sum_{j\in\mathcal{S}}\rho_j\bigl(\E_\mu[T_j]-c_j\bigr)^2
\quad\text{s.t.}\quad \E_\mu[T_j]=c_j,\,j\in\mathcal{H}.
\label{eq:iproj-def}
\end{equation}
The all-hard regime ($\mathcal{S}=\varnothing$) is the $I$-projection of Csisz\'ar~\cite{csiszar1975idivergence}; Problem~\eqref{eq:iproj-def}, with its mix of hard and soft constraints is an instance of the \emph{generalized maximum-entropy} program of ~\cite{dudik2007maxent} which was built from the original proposal of Jaynes~\cite{jaynes1957information}.
Under mild regularity, this program is convex with a unique solution $\mu^\ast$ of exponential-tilt form $d\mu^\ast/d\mu_0(\omega)\propto\exp\bigl(\sum_j\lambda_j^\ast T_j(\omega)\bigr)$, with hard multipliers $(\lambda_j^\ast)_{j\in\mathcal{H}}$ pinned by the equality constraints $\E_{\mu^\ast}[T_j]=c_j$ and soft multipliers $(\lambda_j^\ast)_{j\in\mathcal{S}}$ satisfying the implicit Tikhonov relation
\begin{equation}
\lambda_j^\ast \;=\; -\rho_j\bigl(\E_{\mu^\ast}[T_j]-c_j\bigr),\qquad j\in\mathcal{S}.
\label{eq:soft-multiplier}
\end{equation}
These statements follow by specializing Theorem~2 of Dud{\'\i}k, Phillips, and Schapire~\cite{dudik2007maxent} (finite domains) and Theorem~1 of Nghiem and Mar{\'e}chal~\cite{nghiem2026maxent} (continuous-domain extension) --- which are stated for an arbitrary closed proper convex relaxation potential $U$ --- to our mixed potential $U(s) = \delta(s_{\mathcal H}\mid\{c_{\mathcal H}\}) + \tfrac12\sum_{j\in\mathcal S}\rho_j(s_j-c_j)^2$, whose Fenchel conjugate yields the Tikhonov dual term $\sum_{j\in\mathcal S}\lambda_j^2/(2\rho_j)$ and the soft-multiplier relation~\eqref{eq:soft-multiplier} as the stationarity condition $\partial Q/\partial\lambda_j=0$ in the dual.

This exponential-tilt structure is the same that arises from the generalized method of moments~\cite{hansen1982gmm,kitamura1997information}, empirical likelihood~\cite{owen2001empirical}, exponential tilting in importance sampling, and entropy balancing in causal inference~\cite{hainmueller2012entropy}. We adopt the $I$-projection vocabulary because it covers both stages of \textsc{FRESH} uniformly within a single conceptual frame.

A complementary asymptotic interpretation is provided by the Gibbs conditioning principle of Csisz\'ar~\cite{csiszar1984sanov}: a large i.i.d.\ ensemble drawn from $\mu_0$ and conditioned to satisfy the moment constraints concentrates, as $N\to\infty$, on $\mu^\ast$. The exponential-tilt form is therefore the asymptotic limit of empirical conditioning, and this justifies the particle-MCMC realization of the projection in \S\ref{sec:mh}.

\section{\textsc{FRESH} as two information projections}
\label{sec:two-iprojs}

\textsc{FRESH} produces its surrogate $\Qstar(x)\,\Kstar(y\mid x)$ for the inaccessible source pair $Q(x)\,K(y\mid x)$ by composing two $I$-projections of the form~\eqref{eq:iproj-def}, applied sequentially to the factored generative model $\Qref(x)\,\Kref(y\mid x)$.

\subsection{Stage 1: baseline projection}
\label{sec:stage1}
Stage~1 specializes the $I$-projection of \S\ref{sec:info-geom} to the generative marginal $\Qref$ and the baseline-statistic constraint set
\begin{equation}
\mathcal{C}_Q\;=\; \bigl\{q : q(\X_{\textsc{ie}})=1,\ \E_q[\phi_k(X)] = b_k\ \forall k\in\mathcal{H}\bigr\},
\label{eq:CQ}
\end{equation}
producing the surrogate baseline distribution
\begin{equation}
\Qstar \;=\; \argmin_{q\in\mathcal{C}_Q}\,\KL(q\,\|\,\Qref) \;+\; \tfrac{1}{2}\sum_{k\in\mathcal{S}}\rho_k\bigl(\E_q[\phi_k(X)]-b_k\bigr)^2.
\label{eq:stage1-iproj}
\end{equation}
The eligibility filter $q(\X_{\textsc{ie}})=1$ enters as a sub-support restriction; this is a special case of the moment-constraint setup --- via $\E_q[\mathbf{1}_{\X\setminus\X_{\textsc{ie}}}]=0$ --- and the closed-form solution still admits the exponential-tilt form, with the reference $\Qref$ replaced by its restriction to $\X_{\textsc{ie}}$ (see \S\ref{sec:constraints}).

\subsection{Stage 2: conditional kernel projection}
\label{sec:stage2}
With the Stage-1 cohort $\Qstar$ now fixed, Stage~2 projects the reference conditional kernel $\Kref$ onto the outcome-statistic constraint set, with fidelity measured by the $\Qstar$-weighted conditional Kullback--Leibler divergence~\cite{kullback1951information,cover2006elements}:
\begin{equation}
\KL\bigl(K\,\big\|\,\Kref;\,\Qstar\bigr)
\;=\; \int \KL\bigl(K(\cdot\mid x)\,\big\|\,\Kref(\cdot\mid x)\bigr)\,d\Qstar(x).
\label{eq:condKL}
\end{equation}
The constraint set on candidate kernels (with $\{c_j\}$ the source values from~\eqref{eq:constraints}) is
\begin{equation}
\mathcal{C}_K \;=\; \Bigl\{K : \E_{X\sim \Qstar,\,Y\sim K(\cdot\mid X)}[f_j(Y)] = c_j\ \forall j\in\mathcal{H}\Bigr\},
\label{eq:CK}
\end{equation}
and the Stage-2 output is
\begin{equation}
\Kstar \;=\; \argmin_{K\in\mathcal{C}_K}\,\KL\bigl(K\,\big\|\,\Kref;\,\Qstar\bigr) \;+\; \tfrac{1}{2}\sum_{j\in\mathcal{S}}\rho_j\bigl(\E_{X\sim\Qstar,\,Y\sim K(\cdot\mid X)}[f_j(Y)]-c_j\bigr)^2.
\label{eq:stage2-iproj}
\end{equation}

The two outputs jointly define the \textsc{FRESH} surrogate $\Qstar\otimes\Kstar$ for the unobserved source pair $Q\otimes K$. The $\Qstar$-weighted conditional KL of~\eqref{eq:condKL} admits the same exponential-tilt theory as the unconditional projection, because applying~\eqref{eq:iproj-def} to the joint $\Qstar(x)\,\Kref(y\mid x)$ with constraint functions $T_j(x,y)=f_j(y)$ depending only on $y$ yields a joint optimum that factors into a per-$x$ tilt of $\Kref$; see \S\ref{sec:tilt-form}.

\subsection{Two stages versus one}
\label{sec:two-vs-one}
A natural alternative formulation, which we call \emph{joint} \textsc{FRESH}, replaces the 2-stage composition with a single $I$-projection of the joint distribution $\Qref\otimes\Kref$ onto the space of distributions jointly satisfying the constraints. Both procedures yield calibrated joint distributions that satisfy all constraints, and both are special cases of the $I$-projection construction of \S\ref{sec:info-geom}; they generally produce different objects and satisfy different guarantees. We adopt the 2-stage composition as the \textsc{FRESH} default. The detailed comparison and trade-offs --- estimand interpretability, composability across analyses, and conditions for coincidence of the two procedures --- are deferred to the discussion (\S\ref{sec:composition}).

\section{Form of the solutions: exponential tilting}
\label{sec:tilt-form}

Both Stage-1 and Stage-2 outputs inherit the exponential-tilt structure of the general $I$-projection.

\subsection{Stage 1}
The Stage-1 solution is the generative marginal $\Qref$ multiplied pointwise by an eligibility-filtered exponential tilt:
\begin{equation}
\Qstar(x) \;=\; w^\ast(x)\,\Qref(x),\qquad w^\ast(x)\;=\;\frac{\mathbf{1}_{\X_{\textsc{ie}}}(x)\,\exp\bigl(\sum_{k=1}^K \nu_k^\ast \phi_k(x)\bigr)}{Z_X(\nu^\ast)},
\label{eq:stage1-form}
\end{equation}
where $Z_X(\nu)=\E_{\Qref}[\mathbf{1}_{\X_{\textsc{ie}}}(X)\exp(\sum_k\nu_k\phi_k(X))]$ is the eligibility-restricted normalizing constant. By the generalized maxent result~\cite{dudik2007maxent,nghiem2026maxent}, the multipliers split by block: hard multipliers $(\nu_k^\ast)_{k\in\mathcal{H}}$ are determined by the moment-matching constraints $\E_{\Qstar}[\phi_k]=b_k$, and soft multipliers $(\nu_k^\ast)_{k\in\mathcal{S}}$ satisfy the implicit Tikhonov relation $\nu_k^\ast=-\rho_k(\E_{\Qstar}[\phi_k]-b_k)$. 

\subsection{Stage 2}
The Stage-2 problem~\eqref{eq:stage2-iproj} is a \emph{conditional} $I$-projection: the X-marginal $\Qstar$ is held fixed and only the conditional kernel $K(\cdot\mid x)$ is varied, so the induced joint $\Qstar\otimes K$ ranges over the slice of joints whose X-marginal coincides with $\Qstar$. This slice constraint is an infinite family of moment equalities, so the result does not reduce directly to the finite-moment generalized maxent of~\cite{dudik2007maxent,nghiem2026maxent} applied to the joint; it requires a function-valued Lagrange multiplier for the marginal, alongside the finite-dimensional multipliers for the outcome moments. The result is recorded as Proposition~\ref{prop:cond-iproj}; the proof is given in Appendix~\ref{app:proofs} and proceeds via the chain rule of relative entropy, Fenchel duality on the soft block, and L\'eonard's $I$-projection theorem on abstract closed convex constraint sets~\cite{leonard2008minimization}.

\begin{proposition}\label{prop:cond-iproj}
The Stage-2 problem~\eqref{eq:stage2-iproj} has a unique solution $\Kstar$ of the per-$x$ exponential-tilt form
\begin{equation}
\Kstar(y\mid x) \;=\; \frac{\Kref(y\mid x)\,\exp\bigl(\sum_{j=1}^J \lambda_j^\ast f_j(y)\bigr)}{Z(\lambda^\ast\mid x)},
\label{eq:expt-kernel}
\end{equation}
with conditional partition function
\begin{equation}
Z(\lambda\mid x)\;=\;\int \Kref(y\mid x)\,\exp\bigl({\textstyle\sum_j} \lambda_j f_j(y)\bigr)\,dy.
\label{eq:partition-fn}
\end{equation}
The multipliers split by block as in Stage~1: hard multipliers $(\lambda_j^\ast)_{j\in\mathcal{H}}$ are determined by $\E_{X\sim\Qstar,\,Y\sim\Kstar(\cdot\mid X)}[f_j(Y)]=c_j$, and soft multipliers $(\lambda_j^\ast)_{j\in\mathcal{S}}$ satisfy the Tikhonov relation $\lambda_j^\ast=-\rho_j\bigl(\E_{X\sim\Qstar,\,Y\sim\Kstar(\cdot\mid X)}[f_j(Y)]-c_j\bigr)$.
\end{proposition}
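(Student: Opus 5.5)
The plan is to lift the Stage-2 problem to a problem on joint measures whose $X$-marginal is pinned to $\Qstar$. For any candidate kernel $K$ set $\mu_K:=\Qstar\otimes K$ and $\mu_0:=\Qstar\otimes\Kref$. By the chain rule of relative entropy, $\KL(\mu_K\|\mu_0)=\KL(\Qstar\|\Qstar)+\KL(K\|\Kref;\Qstar)=\KL(K\|\Kref;\Qstar)$. Moreover $\E_{X\sim\Qstar,Y\sim K}[f_j(Y)]=\E_{\mu_K}[T_j]$ with $T_j(x,y)=f_j(y)$. So~\eqref{eq:stage2-iproj} is equivalent to minimizing $\KL(\mu\|\mu_0)$ plus the soft quadratic penalty over joints $\mu$ with $\mu_X=\Qstar$ and $\E_\mu[T_j]=c_j$ for $j\in\mathcal H$. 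Conversely, any joint with $\mu_X=\Qstar$ and finite divergence disintegrates as $\Qstar\otimes K$ with $K$ unique $\Qstar$-a.e. This equivalence is exact, so it suffices to solve the joint problem.

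Next I would handle the soft block by Fenchel duality, reducing everything to a family of all-hard problems. Write $\tfrac12\rho_j(s_j-c_j)^2=\sup_{\lambda_j}\{-\lambda_j(s_j-c_j)-\lambda_j^2/(2\rho_j)\}$. Equivalently, introduce slack targets $s_j$ for $j\in\mathcal S$ and minimize first over $\mu$ with $\E_\mu[T_j]=s_j$, then over $s$. For fixed targets $s$, the constraint set $\{\mu:\mu_X=\Qstar,\ \E_\mu[T_j]=s_j\ \forall j\}$ is convex and closed in the weak/$\tau$-topology used by L\'eonard, given the standard moment-generating regularity on the $f_j$ assumed in \S\ref{sec:setup}. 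L\'eonard's theorem~\cite{leonard2008minimization} then gives a unique minimizer of the form
\[
\frac{d\mu^\ast}{d\mu_0}(x,y)=\exp\Bigl(\psi(x)+\sum_j\lambda_j f_j(y)\Bigr).
\]
Here $\psi$ is the function-valued multiplier for the marginal constraint. Imposing $\mu^\ast_X=\Qstar$ forces $\int\Kref(y\mid x)\exp(\psi(x)+\sum_j\lambda_jf_j(y))\,dy=1$, that is, $\psi(x)=-\log Z(\lambda\mid x)$. This yields exactly the per-$x$ tilt~\eqref{eq:expt-kernel} with partition function~\eqref{eq:partition-fn}. As a sanity check, I would verify this directly: for any feasible $K$, the Pythagorean identity
\[
\KL(K\|\Kref;\Qstar)=\KL(K\|\Kstar;\Qstar)+\sum_j\lambda_j c_j-\E_{\Qstar}[\log Z(\lambda\mid X)]
\]
shows that $\Kstar$ is optimal in the all-hard case, and optimal uniquely, since the middle term vanishes only when $K=\Kstar$ $\Qstar$-a.e.

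For the mixed problem I would work with the concave dual
\[
D(\lambda)=\sum_{j\in\mathcal H}\lambda_jc_j+\sum_{j\in\mathcal S}\Bigl(\lambda_jc_j-\frac{\lambda_j^2}{2\rho_j}\Bigr)-\E_{\Qstar}\bigl[\log Z(\lambda\mid X)\bigr].
\]
Differentiating under the integral, $\partial_{\lambda_j}\E_{\Qstar}\log Z=\E_{\Qstar\otimes\Kstar_\lambda}[f_j]$. Stationarity in the hard coordinates gives the moment equalities. Stationarity in the soft coordinates gives $\lambda_j=-\rho_j(\E[f_j]-c_j)$, the claimed Tikhonov relation. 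Uniqueness of $\Kstar$ follows from strict convexity of the primal in $K$: the KL term is strictly convex and the quadratic penalty is convex. The soft multipliers are then pinned down through the relation, and the hard multipliers are identified up to directions where the $f_j$ are affinely dependent, which do not affect $\Kstar$.

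The main obstacle is attainment of the dual optimum together with zero duality gap. I need $\lambda^\ast$ to exist, with $\E_{\Qstar}\log Z(\lambda\mid X)$ finite and differentiable near $\lambda^\ast$, and the hard targets to lie in the relative interior of the achievable moment set. This is exactly where the infinite marginal constraint could break the finite-dimensional argument. My first attempt would be to invoke L\'eonard's result on the set with fixed marginal and finite moment constraints, under the assumption that $\E_{\Qstar}[\log Z(\lambda\mid X)]<\infty$ on a neighbourhood of $\lambda^\ast$. Then $D$ is differentiable there, and the soft Tikhonov term makes $D$ coercive in the $\mathcal S$ directions, which secures a maximizer. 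The hard directions are handled by the interiority (Slater-type) condition.
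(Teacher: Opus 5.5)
Your proposal is correct and follows essentially the same route as the paper's proof. Both lift to joints with $X$-marginal pinned to $\Qstar$ via the chain rule, handle the soft block by Fenchel duality (or slack targets), and invoke L\'eonard's theorem with a function-valued marginal multiplier, which is then pinned to $-\log Z(\lambda\mid x)$; both read the hard and Tikhonov conditions off dual stationarity and leave dual attainment to a regularity assumption. Your Pythagorean identity is a worthwhile addition: combined with convexity of $m\mapsto\lambda_j^\ast m+\tfrac{\rho_j}{2}(m-c_j)^2$, which is minimized exactly at the Tikhonov point, it gives $F(K)\ge F(\Kstar)+\KL(K\|\Kstar;\Qstar)$ for the full mixed objective $F$, so once a stationary point $\lambda^\ast$ of your $D$ exists, optimality and uniqueness follow without L\'eonard or the inf--sup interchange.
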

\noindent (Proof in Appendix~\ref{app:proofs}.)

\subsection{Interpretation of the multipliers}
The exponent in each tilt is linear in $(\lambda,T)$, so each multiplier $\lambda_j^\ast$ measures the strength of the perturbation that constraint $j$ imposes on the reference. Equivalently $\lambda_j^\ast=\partial\log Z/\partial c_j$ at the optimum, so the multiplier is the marginal cost (in fidelity) of moving the constraint target by a unit. Large $|\lambda_j^\ast|$ flags constraints that pull the reference far; small values flag constraints the reference already satisfies. The decision-relevant readability of these magnitudes is preserved across both stages and is one of several reasons we adopt KL as the fidelity functional (see \S\ref{sec:fidelity} for a fuller discussion).

\begin{remark}[Maximum-entropy interpretation]
Equivalently, $\Kstar(\cdot\mid x)$ is the maximum-entropy kernel relative to $\Kref(\cdot\mid x)$ subject to the aggregate constraints. The decomposition $\KL(K\|\Kref)=-H(K)+\E_K[-\log\Kref]$ exhibits two distinct fidelities to the reference: the entropy term resists collapse to a degenerate distribution, while the cross-entropy term resists assigning mass where the reference says outcomes are unlikely.
\end{remark}

\section{Empirical realizations}
\label{sec:empirical}

The exponential-tilt expressions of \S\ref{sec:tilt-form} are population-level objects. In practice, both stages are realized on finite samples drawn from the generative model. The multipliers $(\nu^\ast,\lambda^\ast)$ are computed by either convex optimization (Stage~1) or by an MCMC scheme that samples from the projected target without ever evaluating $\Kref$ densely (Stage~2). The choice between the two routes is dictated by the interfaces the underlying generative model exposes; the design considerations are discussed in \S\ref{sec:realization}.

\subsection{Convex optimization: empirical entropy balancing for Stage~1}
\label{sec:eb}
Draw $M$ i.i.d.\ samples $\widetilde X_1,\dots,\widetilde X_M\sim g$, drop those failing the eligibility filter, and relabel survivors as $x_1,\dots,x_N$. The empirical Stage-1 problem is to find weights $w\in\R_{\ge 0}^N$ solving
\begin{equation}
\min_{w\ge 0,\,\mathbf{1}^\top w=N}\quad \sum_{i=1}^N w_i\log w_i \;+\; \tfrac{1}{2}\sum_{k\in\mathcal{S}}\rho_k\Bigl(\tfrac{1}{N}\sum_{i=1}^N w_i\phi_k(x_i)-b_k\Bigr)^2
\label{eq:stage1-empirical}
\end{equation}
subject to $\tfrac{1}{N}\sum_{i=1}^N w_i\phi_k(x_i)=b_k$ for all $k\in\mathcal{H}$, where $(\mathcal{H},\mathcal{S})$ is the hard/soft partition of \S\ref{sec:info-geom}. The normalization $\sum_i w_i=N$ (mean weight one) is the convention used downstream by Stage~2.

Problem~\eqref{eq:stage1-empirical} is the empirical generalized maxent of Dud{\'\i}k, Phillips, and Schapire~\cite{dudik2007maxent} on the finite domain $\{x_1,\dots,x_N\}$ with empirical reference $\widehat g_N = \tfrac{1}{N}\sum_i\delta_{x_i}$ (the empirical objective $\sum_i w_i\log w_i$ is $N$ times $\KL(q_w\|\widehat g_N)$ with $q_w=\tfrac{1}{N}\sum_i w_i\delta_{x_i}$). Assuming the hard-block target lies in the relative interior of the convex hull, $b_{\mathcal H}\in\operatorname{ri}\operatorname{conv}\{\phi_{\mathcal H}(x_i)\}_{i=1}^N$ --- which guarantees feasibility, finite Lagrange multipliers, and nondegenerate weights --- a specialization of Theorem~2 of~\cite{dudik2007maxent} to our mixed hard+$\ell_2^2$ potential gives a unique solution
\begin{equation}
w_i \;\propto\; \exp\Bigl(\sum_{k=1}^K \nu_k^\ast \phi_k(x_i)\Bigr),
\label{eq:stage1-weights}
\end{equation}
normalized to $\sum_i w_i=N$, where the hard multipliers $(\nu_k^\ast)_{k\in\mathcal{H}}$ are determined by the equality constraints and the soft multipliers $(\nu_k^\ast)_{k\in\mathcal{S}}$ satisfy $\nu_k^\ast=-\rho_k(\tfrac{1}{N}\sum_i w_i\phi_k(x_i)-b_k)$.

The all-hard case recovers Hainmueller's entropy balancing~\cite{hainmueller2012entropy}, and the soft block is the Tikhonov analog of the box-constraint relaxations in the approximately-balancing weights literature~\cite{zubizarreta2015stable,wang2020minimal,chan2016calibration}. Gibbs conditioning~\cite{csiszar1984sanov} identifies the $N\to\infty$ limit of the empirical solution with the population $I$-projection of \S\ref{sec:two-iprojs}. The output of Stage~1 is the cohort $\{(x_i,w_i)\}_{i=1}^N$, defining the empirical measure $\Qstar(dx)=\tfrac{1}{N}\sum_i w_i\,\delta_{x_i}(dx)$.

\subsection{Particle MCMC: Metropolis--Hastings sampler for Stage~2}
\label{sec:mh}
In the regime of interest, the reference kernel $\Kref$ is a generative simulator that is cheap to sample but expensive to evaluate densely: direct evaluation of $Z(\lambda\mid x)$ would require a one-dimensional integral against $\Kref(\cdot\mid x)$ at every iteration. We bypass $Z(\lambda\mid x)$ entirely with a Metropolis--Hastings (MH) sampler.

Fix a cohort $\{(x_i,w_i)\}_{i=1}^N$ with $\sum_i w_i = N$ representing $\Qstar$. Initialize $y_i\sim\Kref(\cdot\mid x_i)$ independently; the state of the chain is the ensemble $\mathbf{y}=(y_1,\dots,y_N)$. The target on $\Y^N$ (with the $x_i$ fixed) is the product of per-particle exponentially tilted kernels:
\begin{equation}
\pi(\mathbf{y}\mid\boldsymbol\lambda)
\;\propto\;\prod_{i=1}^N \Kref(y_i\mid x_i)\,\exp\Bigl(\sum_{j=1}^J \lambda_j f_j(y_i)\Bigr).
\label{eq:target}
\end{equation}
Each particle is independently tilted, so the marginal of $y_i$ under~\eqref{eq:target} is exactly the per-$x_i$ exponential tilt $\Kstar(\cdot\mid x_i)\propto\Kref(\cdot\mid x_i)\exp(\sum_j\lambda_j f_j)$ of~\eqref{eq:expt-kernel}. The cohort weights $w_i$ do not enter~\eqref{eq:target}; they enter only in the empirical constraint estimator $\widehat g_j = \tfrac{1}{N}\sum_i w_i f_j(y_i)$, which is unbiased for $\E_{X\sim\Qstar,Y\sim\Kstar(\cdot\mid X)}[f_j(Y)]$ at $\boldsymbol\lambda^\ast$ and drives the Robbins--Monro update of \S\ref{sec:rm} to the empirical Stage-2 KKT point.

At each iteration, draw a per-particle Bernoulli$\bigl(\min(1,\alpha w_i)\bigr)$ inclusion mask $S\subseteq\{1,\dots,N\}$, propose $\tilde y_i\sim\Kref(\cdot\mid x_i)$ for $i\in S$, and leave $y_i$ unchanged for $i\notin S$. The $\Kref$-density factors in the MH acceptance ratio cancel exactly between target and proposal:

\begin{theorem}[Reference-density cancellation]\label{thm:cancellation}
Let the proposal density on $\tilde{\mathbf{y}}$ given $\mathbf{y}$ and the random subset $S$ be $q(\tilde{\mathbf{y}}\mid\mathbf{y},S)=\prod_{i\in S}\Kref(\tilde y_i\mid x_i)$. Then the Metropolis--Hastings acceptance probability for the target~\eqref{eq:target} is
\begin{equation}
a(\tilde{\mathbf{y}}\mid\mathbf{y})\;=\;\min\!\Bigl(1,\,\exp\bigl(\Delta\bigr)\Bigr),\qquad
\Delta \;=\; \sum_{i\in S} \sum_{j=1}^J\lambda_j\bigl[f_j(\tilde y_i)-f_j(y_i)\bigr].
\label{eq:accept}
\end{equation}
\end{theorem}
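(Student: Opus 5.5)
The plan is to write down the generic Metropolis--Hastings acceptance ratio for a proposal that depends on the random subset $S$, and show that every $\Kref$ density factor cancels. Condition on the mask $S$. The inclusion probabilities $\min(1,\alpha w_i)$ do not depend on $\mathbf{y}$, so the law of $S$ is the same for the forward and reverse moves. Hence the mask probabilities cancel from the ratio, and it is legitimate to treat the kernel as a mixture over $S$ of MH kernels, each reversible with respect to $\pi$. Formally, I would verify detailed balance for each fixed $S$ and then average over $S$. For fixed $S$ the proposal leaves $y_i$ unchanged for $i\notin S$. The transition therefore lives on the coordinates in $S$, and the reverse-move density is $q(\mathbf{y}\mid\tilde{\mathbf{y}},S)=\prod_{i\in S}\Kref(y_i\mid x_i)$, a proposal independent of the current state.

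Next I would compute
\[
\frac{\pi(\tilde{\mathbf{y}}\mid\boldsymbol\lambda)\,q(\mathbf{y}\mid\tilde{\mathbf{y}},S)}{\pi(\mathbf{y}\mid\boldsymbol\lambda)\,q(\tilde{\mathbf{y}}\mid\mathbf{y},S)}.
\]
In the ratio of targets from~\eqref{eq:target}, the unknown normalizing constant cancels. So do all factors with $i\notin S$, since $\tilde y_i=y_i$ there. This leaves
\[
\prod_{i\in S}\frac{\Kref(\tilde y_i\mid x_i)\exp(\sum_j\lambda_j f_j(\tilde y_i))}{\Kref(y_i\mid x_i)\exp(\sum_j\lambda_j f_j(y_i))}.
\]
Multiplying by the proposal ratio $\prod_{i\in S}\Kref(y_i\mid x_i)/\Kref(\tilde y_i\mid x_i)$ removes all the $\Kref$ factors. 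What remains is $\exp(\Delta)$ with $\Delta$ as in~\eqref{eq:accept}, and taking $\min(1,\cdot)$ gives the stated acceptance probability.

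The main obstacle is measure-theoretic care rather than algebra. The densities must be ratios of densities with respect to a common dominating measure on $\Y^{|S|}$. The case where $\Kref(y_i\mid x_i)=0$ must be handled, so I would restrict attention to the support. That is harmless, because the chain is initialized from $\Kref$ and proposals are drawn from $\Kref$, so the states have positive density almost surely. I would also note two further conditions. The unnormalized target must be integrable, i.e.\ $Z(\lambda\mid x_i)<\infty$, so that $\pi$ in~\eqref{eq:target} is a proper distribution. And when $S=\varnothing$ the move is trivially the identity, with $\Delta=0$.
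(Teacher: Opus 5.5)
Your proposal is correct and follows the paper's proof essentially step for step. You condition on the mask $S$, whose law depends only on the weights and not on $\mathbf{y}$. You then cancel the normalizing constant and the $i\notin S$ factors in the target ratio, and let the $\Kref$ proposal ratio remove the remaining reference-density factors, leaving $\exp(\Delta)$. Your extra remarks are sound refinements of points the paper's shorter argument leaves implicit: justifying the kernel as a state-independent mixture over $S$ of $\pi$-reversible block updates on $\Y^{|S|}$, handling the support, and requiring $Z(\lambda\mid x_i)<\infty$.
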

\noindent (Proof in Appendix~\ref{app:proofs}.)

This is the engineering payoff of using $\Kref$ as the proposal: the chain never evaluates $\Kref(y\mid x)$, only samples from it. The simulator is treated as a black-box generator, and the only quantities evaluated each acceptance step are the constraint functions $f_j$ at the proposed sample.

\subsection{Adaptive Lagrange multipliers via Robbins--Monro}
\label{sec:rm}
The multipliers $\boldsymbol\lambda^\ast$ are not known in advance. We adapt them on-line by stochastic approximation~\cite{robbins1951stochastic}. After each MH step, update
\begin{equation}
\lambda_j^{(t+1)}\;=\;\lambda_j^{(t)} \;+\; \gamma_t\,\Bigl(c_j - \widehat{g}_j^{(t)} - \mathbf{1}\{j\in\mathcal{S}\}\,\tfrac{\lambda_j^{(t)}}{\rho_j}\Bigr),\qquad
\widehat{g}_j^{(t)}\;=\;\frac{1}{N}\sum_{i=1}^N w_i\,f_j(y_i^{(t)}),
\label{eq:rm}
\end{equation}
with step size $\gamma_t$ satisfying the Robbins--Monro conditions $\sum_t\gamma_t=\infty$ and $\sum_t\gamma_t^2<\infty$. (In the implementation $\gamma_t=\gamma_0/(t_0+t)^a$ with $a\in(\tfrac{1}{2},1]$, a standard polynomial-decay schedule that satisfies both conditions in the $t\to\infty$ sense; in our case study $a=0.6$, $t_0=1$, $\gamma_0=1$. Each per-step update is also bounded by a clip $|\Delta\lambda_j|\le \delta_{\max}$ to suppress transient instability when the bracketed residual is far from zero. Numerical diagnostics for the case-study runs --- acceptance rates, $\boldsymbol\lambda$-trajectory summaries, multi-chain Gelman--Rubin $\widehat R$, achieved landmark deviations, and convergence within $T_{\max}$ --- are reported in Appendix~\ref{app:mh-diagnostics}.)

The update is a noisy gradient step on the unified hard+soft dual functional $\Lambda(\lambda)=\log Z(\lambda)-\sum_{j\in\mathcal{H}\cup\mathcal{S}}\lambda_j c_j+\sum_{j\in\mathcal{S}}\lambda_j^2/(2\rho_j)$, whose gradient is
$\partial\Lambda/\partial\lambda_j=\E_{K_\lambda}[f_j]-c_j+\mathbf{1}\{j\in\mathcal{S}\}\,\lambda_j/\rho_j$. Setting this to zero recovers the unified constraint conditions: $\E_{K^\ast}[f_j]=c_j$ for $j\in\mathcal{H}$ and $\lambda_j^\ast=-\rho_j(\E_{K^\ast}[f_j]-c_j)$ for $j\in\mathcal{S}$~\cite{dudik2007maxent,nghiem2026maxent}. The extra ridge term $-\gamma_t\lambda_j^{(t)}/\rho_j$ in the soft branch is a shrinkage that pulls soft multipliers toward zero in proportion to the looseness of the penalty weight.  The combined chain --- sampling $\mathbf{y}$ given $\boldsymbol\lambda$ and updating $\boldsymbol\lambda$ given $\mathbf{y}$ --- is an instance of adaptive Markov chain Monte Carlo.

\section{Constraint extensions}
\label{sec:constraints}

Several extensions of the basic moment-constraint setup are routinely needed in clinical practice. We record them here as variations on the same $I$-projection theme; each preserves the exponential-tilt structure of \S\ref{sec:tilt-form}.

\subsection{Sub-support restriction}
\label{sec:restricted-iproj}
When the projection must be supported on a measurable subset $\Omega_0\subseteq\Omega$ --- in \textsc{FRESH} this is the eligibility set $\X_{\textsc{ie}}\subseteq\X$ --- the $I$-projection still admits the exponential-tilt form, with the reference $\mu_0$ replaced by its restriction to $\Omega_0$. Equivalently, the unrestricted projection density is multiplied by $\mathbf{1}_{\Omega_0}$ and renormalized. This is the formal device by which inclusion / exclusion criteria enter Stage~1.

\subsection{Subgroup-scoped constraints}
\label{sec:subgroups}
Many clinical reports give summary statistics restricted to a subgroup $\X_j\subseteq\X$ (e.g.\ ``median OS in the $\ge 65$ stratum''). These are accommodated by replacing $f_j(y)$ with $\mathbf{1}_{\X_j}(x)\,f_j(y)$ and the empirical estimate $\widehat g_j$ in~\eqref{eq:rm} with the subgroup-normalized average
\begin{equation*}
\widehat g_j \;=\; \frac{\sum_{i\in\X_j}w_i\,f_j(y_i)}{\sum_{i\in\X_j}w_i}.
\end{equation*}
Multiple subgroup constraints are handled by carrying one $\lambda_j$ and one indicator $\mathbf{1}_{\X_j}$ per constraint. The factorization in~\eqref{eq:target} extends without modification.

\subsection{Smooth surrogates for quantile constraints}
\label{sec:sigmoid}
A quantile constraint at threshold $n_j$ and target percentile $p_j$ takes the indicator form $f_j(y)=\ind{y\le n_j}$ with $c_j=p_j$. The strict indicator produces a $\Kstar$ that is discontinuous in $y$ at $n_j$ --- an artifact of KL's lack of metric awareness in $\Y$-space~\cite{villani2008oldnew}. We replace the indicator with the smooth sigmoid surrogate
\begin{equation}
f_j^\varepsilon(y)\;=\;\sigma\!\bigl(\tfrac{n_j-y}{\varepsilon}\bigr),\qquad
\sigma(u)=\frac{1}{1+e^{-u}}.
\label{eq:sigmoid-surrogate}
\end{equation}
As $\varepsilon\downarrow0$, $f_j^\varepsilon\to\ind{y<n_j}$ pointwise, and the soft-median functional $\E_{P}[f_j^\varepsilon(Y)]$ converges to $\Pr_P[Y\le n_j]$. For positive $\varepsilon$ the surrogate is smooth in $y$, smooth in $\varepsilon$, and smooth in any parameter on which $\Kref$ depends smoothly.

For typical clinical reporting (e.g.\ overall survival in months), $\varepsilon=10$~days produces a surrogate operationally indistinguishable from the exact quantile constraint while removing the discontinuity in $\Kstar$. With $f_j^\varepsilon$ in place of $f_j$ in~\eqref{eq:expt-kernel}, the resulting kernel
\begin{equation}
\Kstar(y\mid x)\;=\;\frac{\Kref(y\mid x)\,\exp\bigl(\sum_j\lambda_j^\ast f_j^\varepsilon(y)\bigr)}{Z(\lambda^\ast\mid x)}
\label{eq:smoothed-tilt}
\end{equation}
is everywhere positive, smooth in $y$, smooth in $x$ wherever $\Kref$ is, and converges to the hard tilt as $\varepsilon\to 0$. The MH sampler of \S\ref{sec:mh} runs against this smoothed target without modification.

\section{Combined algorithm}
\label{sec:algorithm}

The two stages compose end-to-end into Algorithm~\ref{alg:fresh}. Stage~1 produces the cohort and weights; Stage~2 is the MH chain over outcome samples conditioned on the Stage~1 cohort.

\begin{algorithm}[H]
\caption{\textsc{FRESH}}
\label{alg:fresh}
\begin{algorithmic}[1]
\Require Generative model with samplers $X\sim g$, $Y\sim\Kref(\cdot\mid x)$ and density $\Kref(y\mid x)$; eligibility set $\X_{\textsc{ie}}$; baseline statistics $\{(\phi_k,b_k)\}_{k=1}^K$ with hard/soft partition $(\mathcal{H}_X,\mathcal{S}_X)$ and soft penalties $\{\rho_k\}_{k\in\mathcal{S}_X}$; outcome statistics $\{(f_j,c_j)\}_{j=1}^J$ with hard/soft partition $(\mathcal{H}_Y,\mathcal{S}_Y)$ and soft penalties $\{\rho_j\}_{j\in\mathcal{S}_Y}$; cohort size $M$; sigmoid scale $\varepsilon$; proposal rate $\alpha\in(0,1]$; step-size schedule $\{\gamma_t\}$; clip $\delta_{\max}$; iterations $T$; burn-in $T_b$; spacing $s$; chain depth $K_c$.
\Statex \textbf{Stage 1 --- Cohort construction.}
\State Draw $\widetilde X_1,\dots,\widetilde X_M\sim g$ and discard those with $\widetilde X_m\notin\X_{\textsc{ie}}$; relabel survivors as $x_1,\dots,x_N$.
\State Solve the entropy-balancing dual~\eqref{eq:stage1-empirical} for $\nu^\ast\in\R^K$ and set $w_i\propto\exp(\sum_k\nu_k^\ast\phi_k(x_i))$, normalized to $\sum_i w_i=N$.
\Statex \textbf{Stage 2 --- Kernel calibration via MH.}
\State Initialize $y_i\sim\Kref(\cdot\mid x_i)$ for $i=1,\dots,N$ and $\lambda_j\gets 0$ for $j=1,\dots,J$.
\For{$t = 1,\dots,T$}
  \State Draw inclusion mask $\xi_i\sim\mathrm{Bernoulli}(\min(1,\alpha w_i))$; set $S=\{i:\xi_i=1\}$.
  \For{$i\in S$}
    \State Propose $\tilde y_i\sim\Kref(\cdot\mid x_i)$.
  \EndFor
  \State $\Delta\gets\sum_{i\in S}\sum_j\lambda_j\bigl[f_j^\varepsilon(\tilde y_i)-f_j^\varepsilon(y_i)\bigr]$.
  \State Draw $u\sim\mathrm{Uniform}(0,1)$. \textbf{If} $u<\min(1,\exp(\Delta))$, accept: $y_i\gets\tilde y_i$ for $i\in S$.
  \State For $j=1,\dots,J$: $\widehat g_j\gets\tfrac{1}{N}\sum_i w_i f_j^\varepsilon(y_i)$; $\lambda_j\gets\lambda_j+\mathrm{clip}\!\bigl(\gamma_t(c_j-\widehat g_j-\mathbf{1}\{j\in\mathcal{S}_Y\}\,\lambda_j/\rho_j),\pm\delta_{\max}\bigr)$.
  \If{$t>T_b$ \textbf{and} $(t-T_b)\bmod s=0$}
    \State Append the current $\mathbf{y}$ to the per-particle chain (ring buffer of depth $K_c$).
  \EndIf
\EndFor
\State \textbf{Return:} cohort weights $\{w_i\}$ from Stage 1, and per-particle outcome chains $\{y_i^{(t_1)},\dots,y_i^{(t_{K_c})}\}_{i=1}^N$ from Stage 2.
\end{algorithmic}
\end{algorithm}

The Stage 2 output is a chain of $K_c$ Monte Carlo samples per particle; pooled across $i$ these form a sample from the calibrated joint $\sum_i w_i\,\delta_{x_i}\otimes\Kstar(\cdot\mid x_i)$. Posterior uncertainty in patient-level predictions is given directly by the per-particle chain spread.

\subsection{Complexity}
Each iteration costs $O(\alpha N)$ samples from $\Kref$ and $O(\alpha N J)$ evaluations of the smooth functionals --- both vectorize trivially over the particle index. The Robbins--Monro update is $O(NJ)$. For typical $J\le 20$ and $\alpha\sim 10^{-2}$ the dominant cost is the $\Kref$ samples, which can be batched across $i$ wherever the underlying simulator supports it.

\section{Case study: head-to-head in metastatic pancreatic adenocarcinoma}
\label{sec:case-study}

\subsection{Clinical question}
Two pivotal Phase III trials defined the modern first-line standard of care in metastatic pancreatic adenocarcinoma (mPAAD) and reported markedly different overall survival (OS) on their experimental arms:
\begin{itemize}
\item \textbf{PRODIGE 4 / ACCORD 11} (NCT00112658)~\cite{conroy2011folfirinox}: FOLFIRINOX (\textsc{FX}) vs.\ gemcitabine; experimental-arm median OS \textbf{11.1 months}.
\item \textbf{MPACT} (NCT00844649)~\cite{vonhoff2013mpact}: gemcitabine + nab-paclitaxel (\textsc{GN}) vs.\ gemcitabine; experimental-arm median OS \textbf{8.5 months}.
\end{itemize}
We use the abbreviations \textsc{FX} and \textsc{GN} for the two experimental regimens throughout the rest of this section.
The $\sim$\ebNativeGapDays-day median OS gap suggests a substantial treatment-effect advantage of \textsc{FX}, but the trials enrolled meaningfully different populations: PRODIGE 4 restricted enrollment to patients with ECOG performance status 0--1 and a younger age distribution, while MPACT enrolled patients up to KPS 70 (ECOG~2) and geographically across more sites with somewhat older participants. The clinical decision relevant to a sponsor or a payer is the within-population effect: \emph{had the same patients received both regimens, how much of the published $\sim$\ebNativeGapDays-day gap would remain?} Direct cross-trial comparison of medians cannot answer this because demographic and treatment-effect contributions are confounded.

Formally, the target estimand of this case study is the Average Treatment Effect on the Treated (ATT) of \textsc{FX} versus \textsc{GN}, with \textsc{FX} as the index treatment and \textsc{GN} as the comparator, evaluated on the overlapping population of the two trials. Because PRODIGE-4's eligibility set (ECOG~0--1, younger age distribution) is the narrower of the two, the overlap is operationally the PRODIGE-4 cohort itself, and we take the PRODIGE-4 covariate distribution as the inferential target throughout. Anchoring on the broader MPACT population would instead require extrapolating \textsc{FX} performance onto patients excluded from PRODIGE-4 by protocol --- an extrapolation the data do not support --- so the direction of transport (MPACT~$\to$~PRODIGE-4 baseline, not the reverse) is determined by the estimand, not by analyst convenience.

The \textsc{FX}-versus-\textsc{GN} comparison is a salient clinical question in mPAAD: the two regimens are the dominant first-line cytotoxic options, no randomized head-to-head trial has been conducted, and treatment-selection guidance is shaped almost entirely by indirect evidence. Two related ITCs help frame the problem. Gresham et al.~\cite{gresham2014nma} performed a Bayesian network meta-analysis (NMA) anchored on the gemcitabine common comparator. Boyne et al.~\cite{boyne2023mpanc} performed a target-trial emulation (TTE) on a population-level real-world cohort from Alberta. Both attempts have well-recognized deficiencies in the specific context of head-to-head trial \emph{simulation}. The TTE answers a comparative-effectiveness question on a real-world cohort whose median OS was 8.3~mo on \textsc{FX} and 5.1~mo on \textsc{GN} --- a regime substantially worse than either pivotal trial reported --- and is silent on how the regimens would have performed had they been administered in a trial-protocol setting. The NMA, conversely, anchors on the trial populations themselves but treats the trial-arm OS curves as exchangeable conditional on regimen; the evident mismatches in baseline prognostic covariates between PRODIGE-4 and MPACT (age, performance status, geography) calls this assumption into question, the benefits of anchoring notwithstanding.  \textsc{FRESH} sits between these two approaches: it produces patient-level trial-population surrogates rather than emulating a real-world cohort, and it then carries out the cross-trial within-population contrast on those surrogates by a second pass of Stage~1 entropy balancing --- transporting the MPACT cohort's baseline distribution onto PRODIGE-4's.

This case study uses \textsc{FRESH} to decompose the gap. The procedure produces, from a single calibrated patient-level surrogate of the MPACT experimental arm, a counterfactual ``\textsc{GN}-into-PRODIGE-4-baseline'' arm whose KM curve quantifies the demographic contribution; the residual difference against the paper-reported PRODIGE-4 arm quantifies the treatment-effect contribution.

\subsection{Generative model}
\textsc{FRESH} treats the underlying generative model as a black box. For this case study the model is a generative transformer pretrained on more than $300{,}000$ oncogene panels from tumor biopsies with a subset linked to clinical features, treatment regimens, and outcomes; we use it through the three interfaces required by Algorithm~\ref{alg:fresh} --- a sampler from the joint distribution $\Qref(X)\,\Kref(Y\mid X)$ over baselines and outcomes, a conditional sampler $Y\sim\Kref(\cdot\mid x)$, and a density evaluator $\Kref(y\mid x)$ for the OS time grid. Nothing about the architecture, training procedure, or pretraining corpus appears anywhere in the calibration; the sections below would read identically were the underlying model replaced by any other simulator with the same three interfaces.

\subsection{Procedure}
The case study begins by sampling a large ensemble of synthetic baseline samples from the generative model.  It then runs Algorithm~\ref{alg:fresh} twice --- once each for MPACT and PRODIGE~4 --- to produce per-trial synthetic cohorts whose baseline and outcome marginals match the desired constraints.  Then it estimates the within-population \textsc{FX}-vs-\textsc{GN} treatment effect by a second pass of Stage~1 entropy balancing applied to the MPACT (\textsc{GN}) synthetic cohort, using PRODIGE-4's published baseline statistics as targets so as to transport the MPACT cohort's covariate distribution onto PRODIGE-4's.

\subsubsection*{Stage 1 (per trial): baseline entropy balancing}
For each trial, the large candidate baselines drawn from $\Qref$ are filtered to the trial's eligibility set $\X_{\textsc{ie}}$. The Stage~1 constraints listed in Table~\ref{tab:mpanc-baseline-targets} are all hard equalities --- the eligibility filter, a non-null requirement on metastatic site, and the moment constraints matching the published Table~1 of each trial: age (mean and quantile constraint at the published median age cutoff), ECOG performance status, sex, race, primary stage, and treatment regimen. The single soft constraint in this case study is a missingness penalty: the underlying generative model $\Qref$ is capable of generating samples with missing values for some baseline features, so we add a Tikhonov soft penalty on the empirical missingness indicator $\phi_{\text{miss}}(x)=\ind{x\text{ has any missing baseline feature}}$ with target $b_{\text{miss}}=0$ and weight $\rho_{\text{miss}}=10^{-3}$. The role of this term is to largely down-weight samples with missing entries while keeping the optimization feasible (a hard $\E_q[\phi_{\text{miss}}]=0$ constraint would be infeasible whenever any drawn sample has a missing feature). Output: per-trial entropy-balanced cohort weights $\{w_i\}$ via~\eqref{eq:stage1-empirical}.

Note that MPACT reports performance status on the Karnofsky scale (KPS) while PRODIGE-4 reports it on the ECOG scale, so the constraint on performance status requires a mapping between the two. We adopt a rule-of-thumb mapping between these imperfectly correlated scales (with empirical backing~\cite{ma2010interconversion,prasad2018kps}); namely, ECOG~0~$=$~KPS~100, ECOG~1~$=$~KPS~80--90, and ECOG~2~$=$~KPS~60--70, and read MPACT's reported KPS distribution into ECOG buckets via this lookup before entropy balancing.

\subsubsection*{Stage 2 (per trial): MH calibration}
For each trial, Algorithm~\ref{alg:fresh}'s Stage~2 runs the MH chain against the trial's published OS landmarks (medians plus 6/12/18/24/30-month landmark survival probabilities derived from the published Kaplan--Meier curves and Table~2). All Stage~2 outcome constraints in this case study are hard equalities ($\mathcal{S}=\varnothing$); the soft-multiplier branch of the Robbins--Monro update~\eqref{eq:rm} therefore reduces to the standard hard form. The output is a per-particle chain of OS draws targeting the $I$-projection $\Kstar$ of $\Kref$ onto the trial's outcome-constraint set under the trial-specific $\Qstar$. Numerical hyperparameter settings used in the runs are listed in Appendix~\ref{app:mh-hyperparameters}; convergence diagnostics (acceptance rates, multi-chain Gelman--Rubin $\widehat R$, achieved-vs-target landmark deviations, stopping evidence) are in Appendix~\ref{app:mh-diagnostics}.

\subsubsection*{Cross-trial comparison: second-pass entropy balancing}
\label{sec:case-study-eb}
The within-population treatment-effect contrast is computed by a second application of Stage~1 entropy balancing to the MPACT synthetic cohort. Each MPACT particle enters this step carrying its trial-specific Stage~1 weight $w_i$ as the empirical reference; the moment targets are now PRODIGE-4's published baseline statistics (Table~\ref{tab:mpanc-baseline-targets}) rather than MPACT's. Solving~\eqref{eq:stage1-empirical} with these substituted targets yields rebalanced weights $w'_i$ on the MPACT cohort whose induced baseline distribution matches PRODIGE-4's. The cross-trial outcome contrast is then a weighted Kaplan--Meier comparison of (i) the PRODIGE-4 (\textsc{FX}) MH-sampled arm under its trial-specific Stage~1 weights $w_i$ and (ii) the MPACT (\textsc{GN}) MH-sampled arm under the rebalanced weights $w'_i$; medians, landmark survival, and $\Delta$RMST are read off the two weighted KM curves directly (\S\ref{sec:case-study-results}).

\subsubsection*{Error quantification: pseudo-IPD bootstrap of Stage~2 targets}
\label{sec:case-study-error-quant}
The $95\%$ confidence intervals reported in Table~\ref{tab:mpanc-eb-results} propagate the underlying \emph{trial-population uncertainty} --- uncertainty about the published Kaplan--Meier curves on which the Stage~2 outcome targets are anchored --- through the cross-trial contrast. We use the standard nonparametric pipeline of Guyot et al.~\cite{guyot2012reconstruction} to (i) reconstruct pseudo individual-patient data from each published KM curve, (ii) bootstrap that pseudo-IPD to obtain $B$ perturbed Stage~2 outcome targets per arm, and (iii) fan out warm-started MH chains over those targets to produce $B$ perturbed posterior ensembles per arm. The downstream cross-trial step (here second-pass entropy balancing followed by weighted Cox / KM / RMST) is then iterated over $B_{\text{source}}\times B_{\text{target}}$ replicate pairs and aggregated to a pointwise percentile envelope.

\paragraph{Step 1. Pseudo-IPD reconstruction.} For each trial arm we apply the Guyot et al.~\cite{guyot2012reconstruction} algorithm to the digitized paper-adjusted KM curve $\widehat S(t)$, the published at-risk table $\{(T_k,n_k)\}$, and the published total-event count, producing a synthetic IPD of size $n_0$ (one row per pseudo-patient with columns $(\text{time},\,\text{event})$). Per-interval iteration assigns events from KM jumps and closes the gap to each published $n_k$ by uniform censoring placement, after which the interval-aggregated event total is rebalanced to the published total. The reconstructed IPDs round-trip to within $0.012$ of the published landmark survivals on both arms (PRODIGE-4 \textsc{FX}: $126/126$ events; MPACT \textsc{GN}: $333/333$ events).

\paragraph{Step 2. Bootstrap and target perturbation.} Subject-level resampling of the pseudo-IPD with deterministic per-replicate seeds yields $B$ replicate cohorts per arm. For each replicate we re-fit the Kaplan--Meier estimator and recompute every percentile-typed entry in the trial's metric-target configuration --- the same landmark survival probabilities and median-OS quantiles listed in Table~\ref{tab:mpanc-outcome-targets}. Each perturbed target spec is a draw from the sampling distribution of the published outcome statistics under nonparametric uncertainty about the underlying survival curve.

\paragraph{Step 3. Fan-out warm-started MH.} For each replicate we run Algorithm~\ref{alg:fresh}'s Stage~2 against the perturbed targets, warm-started from the reference (un-perturbed) MH chain's tail sample. The Stage~1 cohort weights $\{w_i\}$ are reused unchanged across replicates by design --- only the metric-target specification and the chain random number generator vary --- so the reference and replicate ensembles share particles and the chains converge in much shorter time than a cold start. The output is $B$ perturbed posterior ensembles per arm, sharing the row-aligned particle base.

\paragraph{Step 4. Cross-trial contrast under propagated uncertainty.} The downstream cross-trial step is iterated over all $B_{\text{source}}\times B_{\text{target}}$ replicate pairs (Cartesian product). The reported $95\%$ CIs in Table~\ref{tab:mpanc-eb-results} are the pointwise $(q_{0.025},\,q_{0.975})$ envelope of the per-pair statistic across the resulting collection. The headline run uses $B_{\text{source}}=B_{\text{target}}=100$ replicate ensembles per arm, so that the Cartesian product yields $10{,}000$ replicate pairs.

The resulting CIs reflect the variability of the trial-population KM curves themselves and not just the within-cohort residual variation. This is materially wider than a single-realization CI when the underlying KM curves are estimated from modest event counts, as is the case for both arms here: the propagated HR confidence interval (Table~\ref{tab:mpanc-eb-results}) crosses unity, while a single-realization weighted Cox fit on the reference cohorts alone would have produced a much narrower interval excluding it.

\subsection{Calibration constraints supplied to each trial}
\label{sec:case-study-constraints}
The numerical content of Stage~1 (entropy balancing) is the list of trial-specific baseline-statistic targets in Table~\ref{tab:mpanc-baseline-targets}; the numerical content of Stage~2 (MH sampling) is the list of OS landmark-survival targets in Table~\ref{tab:mpanc-outcome-targets}.

\begin{table}[H]
\centering
\small
\setlength{\tabcolsep}{6pt}
\renewcommand{\arraystretch}{1.1}
\begin{tabular}{@{}l c c@{}}
\toprule
Statistic & MPACT & PRODIGE-4 \\
\midrule
\multicolumn{3}{l}{\textit{Age (years)}} \\
\quad median                          & 62           & 61           \\
\quad min, max                        & 27, 86       & 25, 76       \\
\quad quantile constraint             & $F(65)\!=\!0.589$ & ---     \\
\multicolumn{3}{l}{\textit{Sex (proportions)}} \\
\quad Male                            & 0.569        & 0.620        \\
\quad Female                          & 0.431        & 0.380        \\
\multicolumn{3}{l}{\textit{ECOG performance status (proportions, remapped, reweighted)}} \\
\quad 0                               & 0.161        & 0.376        \\
\quad 1                               & 0.764        & 0.624        \\
\quad 2                               & 0.075        & ---          \\
\multicolumn{3}{l}{\textit{Race (proportions)}} \\
\quad White                           & 0.877        & ---          \\
\multicolumn{3}{l}{\textit{Number of metastatic locations}} \\
\quad proportion with 1               & 0.077        & ---          \\
\quad proportion with 2               & 0.469        & ---          \\
\quad proportion with 3               & 0.316        & ---          \\
\quad median                          & ---          & 2            \\
\multicolumn{3}{l}{\textit{Indication (proportions)}} \\
\quad pancreatic adenocarcinoma       & 1.00         & 1.00         \\
\textit{Regimen}                      & \textsc{GN}  & \textsc{FX}  \\
\bottomrule
\end{tabular}
\caption{Stage~1 baseline-statistic constraints supplied to entropy balancing for each trial's experimental arm. All tabulated entries are hard equality constraints; the Stage-1 soft block $\mathcal{S}$ also contains the missingness penalty described in \S\ref{sec:case-study} (not shown). Dashes indicate statistics not reported (or not constrained) in the corresponding trial's published Table~1.}
\label{tab:mpanc-baseline-targets}
\end{table}

\begin{table}[H]
\centering
\small
\setlength{\tabcolsep}{6pt}
\renewcommand{\arraystretch}{1.15}
\begin{tabular}{@{}l c c c@{}}
\toprule
Landmark & Time (d) & MPACT $S(t)$ & PRODIGE-4 $S(t)$ \\
\midrule
6 months                       & 183 & 0.67 & 0.76 \\
Median OS (MPACT)              & 259   & 0.50 & ---  \\
Median OS (PRODIGE-4)          & 337.9 & ---  & 0.50 \\
12 months                      & 365   & 0.35 & --- \\
18 months                      & 548   & 0.16 & 0.19 \\
24 months                      & 731   & 0.09 & 0.10 \\
30 months                      & 913   & 0.05 & 0.06 \\
\bottomrule
\end{tabular}
\caption{Stage~2 outcome calibration targets supplied to the MH chain for each trial. Each row specifies a time $t$ and target landmark survival probability $S(t)$, equivalent to a quantile constraint at percentile $p=100\,(1-S(t))$ on overall survival. The two trial-specific median-OS rows are supplied as explicit $50$th-percentile constraints at the trials' published medians (8.5 mo for MPACT \textsc{GN}, 11.1 mo for PRODIGE-4 \textsc{FX}). All entries are hard equality constraints ($\mathcal{S}=\varnothing$).}
\label{tab:mpanc-outcome-targets}
\end{table}

\subsection{Results}
\label{sec:case-study-results}

\paragraph{Calibration recovery.} Figure~\ref{fig:mpanc-km-full} overlays five curves: the published Kaplan--Meier curves of the two experimental arms (MPACT \textsc{GN}, PRODIGE-4 \textsc{FX}), the corresponding model-predicted per-trial arms after Stage~1 + Stage~2 calibration to each trial's own targets, and the EB-rebalanced MPACT$\to$PRODIGE-4 arm. The model-predicted per-trial curves track their paper-reported counterparts closely, confirming that calibration hits the trial-specified landmarks; the rebalanced curve sits between the MPACT and PRODIGE-4 paper-reported arms, capturing the effect of baseline alignment of the source cohort onto the target's covariate distribution.

\paragraph{Baseline gap between cohorts.} Figure~\ref{fig:mpanc-baseline-dists} shows the entropy-balanced baseline distributions (ECOG, sex, race, age, metastatic-site count) in the two MH-sampled cohorts. The differences --- younger and more performance-status-restricted enrollment in PRODIGE-4 --- are precisely what the second-pass entropy-balancing step (\S\ref{sec:case-study-eb}) closes when carrying out the within-population FX-vs-GN contrast.

\paragraph{Cross-trial within-population contrast.} The primary head-to-head contrast --- FOLFIRINOX vs.\ Gem+nab-Pac in the PRODIGE-4 covariate distribution --- is computed by a second pass of Stage~1 entropy balancing applied to the MPACT synthetic cohort with PRODIGE-4 baseline targets (\S\ref{sec:case-study-eb}); 95\% CIs are the propagated pseudo-IPD bootstrap envelope of \S\ref{sec:case-study-error-quant}. Headline results are in Table~\ref{tab:mpanc-eb-results}.
\begin{table}[H]
\centering
\small
\setlength{\tabcolsep}{6pt}
\renewcommand{\arraystretch}{1.15}
\begin{tabular}{@{}l c c@{}}
\toprule
Quantity & Estimate & 95\% CI \\
\midrule
Median OS, MPACT \textsc{GN}-rebalanced                     & \ebMedOsRwDays~d (\ebMedOsRwMos~mo) & $(\ebMedOsRwDaysLo~d,\ \ebMedOsRwDaysHi~d)$ \\
$S(12\text{ mo})$, MPACT \textsc{GN}-rebalanced             & $\ebStwlv$       & $(\ebStwlvLo,\ \ebStwlvHi)$ \\
$S(24\text{ mo})$, MPACT \textsc{GN}-rebalanced             & $\ebStwfr$       & $(\ebStwfrLo,\ \ebStwfrHi)$ \\
$\Delta$RMST(1y), \textsc{FX} vs.\ \textsc{GN}-rebalanced   & $+\ebDrOne$~d    & $(\ebDrOneLo,\ +\ebDrOneHi)$ \\
$\Delta$RMST(2y), \textsc{FX} vs.\ \textsc{GN}-rebalanced   & $+\ebDrTwo$~d    & $(\ebDrTwoLo,\ +\ebDrTwoHi)$ \\
HR (\textsc{FX} vs.\ \textsc{GN}-rebalanced)                & $\ebHr$          & $(\ebHrLo,\ \ebHrHi)$ \\
\bottomrule
\end{tabular}
\caption{Cross-trial within-population contrast on the MH-sampled cohorts after the entropy balancing of the MPACT cohort onto PRODIGE-4's published baseline statistics (Table~\ref{tab:mpanc-baseline-targets}). All rows report the un-perturbed reference run as the point estimate and the $(q_{0.025},\,q_{0.975})$ envelope across $B_{\text{control}}{\times}B_{\text{treatment}}=100\times 100=10{,}000$ replicate pairs as the CI; per-arm landmark and median-OS rows ($S(12\text{mo})$, $S(24\text{mo})$, median OS on the rebalanced MPACT curve) report the source-arm statistic across pairs and cross-arm rows ($\Delta$RMST, HR) report the cross-trial contrast. All flavors propagate trial-population uncertainty via the pseudo-IPD bootstrap pipeline of \S\ref{sec:case-study-error-quant}.}
\label{tab:mpanc-eb-results}
\end{table}

After the second baseline entropy balancing, the MPACT-derived weighted-median OS shifts from $\ebMpactPredDays$~d (MPACT predicted) to $\ebMedOsRwDays$~d, a $+\ebShiftDays$~d displacement. Roughly a third of the synthetic-frame median-OS gap between MPACT and PRODIGE-4 ($\approx \ebGapDays$~d) is therefore attributable to demographic differences between the trial populations under this estimator. The remaining gap (PRODIGE-4 predicted $\ebProdigePredDays$~d vs.\ EB-rebalanced MPACT $\ebMedOsRwDays$~d, $\approx \ebTreatDays$~d on the median scale) is the model's estimate of the \textsc{FX}-vs-\textsc{GN} \emph{treatment-effect} contribution within the PRODIGE-4 covariate distribution. The propagated bootstrap envelope across $B_{\text{source}}\times B_{\text{target}}=100\times 100=10{,}000$ replicate pairs gives a one-year $\Delta$RMST in favor of \textsc{FX} of $+\ebDrOne$~d (95\% CI $\ebDrOneLo,\ +\ebDrOneHi$) and a two-year $\Delta$RMST of $+\ebDrTwo$~d (95\% CI $\ebDrTwoLo,\ +\ebDrTwoHi$); both intervals straddle zero, but the per-pair effect is positive in $\ebDrOnePos\%$ of replicate pairs at one year and $\ebDrTwoPos\%$ at two years. The Cox proportional-hazards ratio is $\mathrm{HR}=\ebHr$ (95\% CI $\ebHrLo,\ \ebHrHi$); the upper limit of the CI exceeds unity, so the proper-variance HR analysis does not reject the null at $\alpha=0.05$, although $\sim$$\ebHrPsig\%$ of replicate pairs do reach standard significance ($\mathrm{HR}<1$ with $p<0.05$).

\begin{figure}[H]
\centering
\includegraphics[width=0.92\linewidth]{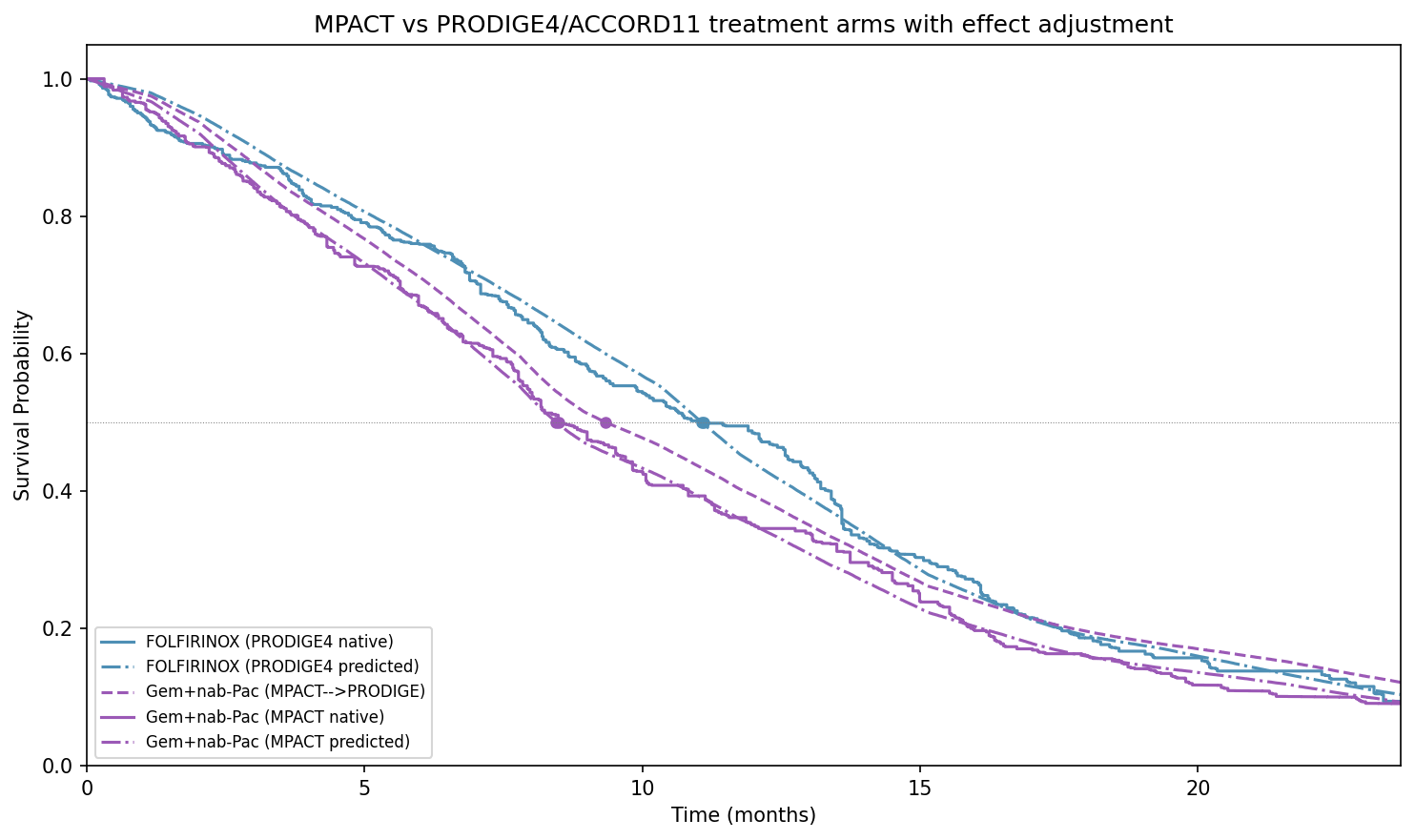}
\caption{Calibration-recovery overlay. Solid curves are published Kaplan--Meier curves (PRODIGE-4 \textsc{FX} in blue, MPACT \textsc{GN} in purple). Dash-dot curves are the model-predicted arms after \textsc{FRESH} calibration to each trial's own targets (Stage~1 + Stage~2 of Algorithm~\ref{alg:fresh}). The dashed purple curve is the EB-rebalanced MPACT$\to$PRODIGE-4 arm: MPACT MH-sampled outcomes reweighted by the second-pass entropy-balancing weights of \S\ref{sec:case-study-eb}. The primary within-population FX-vs-GN treatment-effect estimate is the $\Delta$RMST in Table~\ref{tab:mpanc-eb-results}.}
\label{fig:mpanc-km-full}
\end{figure}

\begin{figure}[H]
\centering
\includegraphics[width=1.0\linewidth]{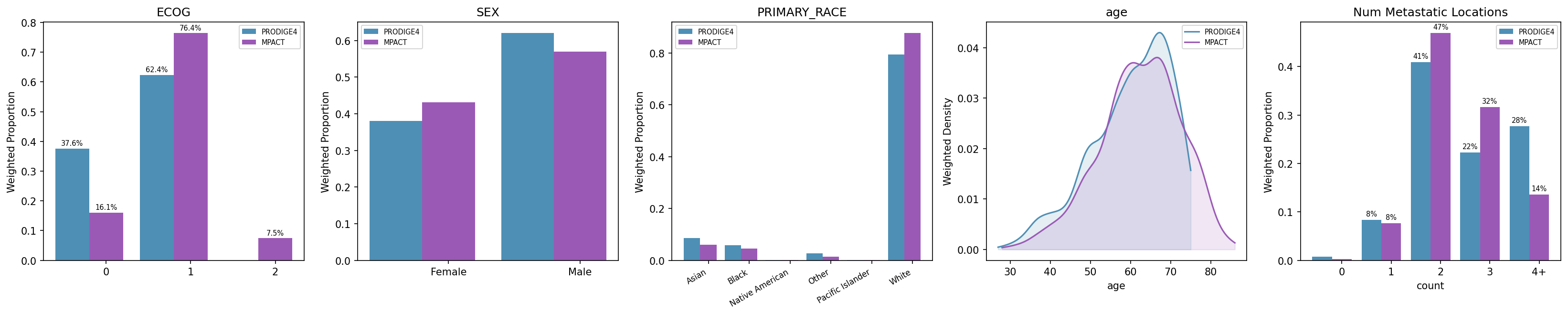}
\caption{Baseline distributions (ECOG, sex, race, age, metastatic-site count) in the MH-sampled MPACT and PRODIGE-4 cohorts after Stage~1 entropy balancing to each trial's own Table~1 statistics. These are the demographic gaps that the second-pass entropy-balancing step (\S\ref{sec:case-study-eb}) closes when computing the within-population FX-vs-GN contrast.}
\label{fig:mpanc-baseline-dists}
\end{figure}

\subsection{Case study notes:}
\begin{itemize}
\item \textbf{Two trial-anchored calibrations.} Both trial-specific surrogates are produced by the same end-to-end algorithm; nothing in the calibration is bespoke to either trial except the supplied baseline and outcome statistics.
\item \textbf{Black-box use of the underlying model.} No part of the \textsc{FRESH} procedure inspects, modifies, or differentiates through the generative transformer. Every step is either an entropy-balancing reweighting of generative samples (Stage~1) or a Metropolis--Hastings chain whose proposal is direct simulation from the model (Stage~2); the model is invoked only through its sampling interface and, in Stage~2, its conditional density --- the latter only as a target factor that cancels in the acceptance ratio (Theorem~\ref{thm:cancellation}). The downstream second-pass entropy-balancing rebalance likewise touches only the model's outputs.
\item \textbf{Causal interpretation requires assumptions \textsc{FRESH} does not supply.} The residual median-OS gap admits a treatment-effect interpretation only under the assumption that the baseline statistics balanced in Stage~1 and in the second-pass entropy-balancing step are sufficient to remove confounding between the regimen and the outcome. Neither \textsc{FRESH} nor the rebalance step is informative about whether the matched feature set is sufficient; the assumption is the user's to defend, as with all causal-inference frameworks for observational studies.
\end{itemize}

\section{Method Discussion}
\label{sec:discussion}

\subsection{The general variational form}
\label{sec:fidelity}

Both stages of \textsc{FRESH} instantiate a single variational template: find a probability measure $\mu^\ast$ that is close to a reference $\mu_0$ while satisfying aggregate constraints, with the constraints partitioned into a hard block $\mathcal{H}$ matched exactly and a soft block $\mathcal{S}$ matched approximately under a smooth penalty $R_j$,
\begin{equation}
\mu^\ast \;=\; \argmin_{\mu}\;\underbrace{\mathcal{D}(\mu\,\|\,\mu_0)}_{\text{fidelity}} \;+\; \underbrace{\sum_{j\in\mathcal{S}} R_j\!\bigl(\E_\mu[T_j] - c_j\bigr)}_{\text{soft constraint penalty}}\quad\text{s.t.}\quad\underbrace{\E_\mu[T_j] = c_j,\,j\in\mathcal{H}}_{\text{hard constraints}}.
\label{eq:fidelity-template}
\end{equation}
The fidelity functional $\mathcal{D}$ measures departure of $\mu$ from $\mu_0$; the soft-constraint penalty $R_j$ is a smooth convex function of the residual (we use the Tikhonov form $R_j(x) = \tfrac{1}{2}\rho_j x^2$ throughout the paper, but other choices --- e.g.\ Huber, $\ell_1$ --- are also sensible). The choice of $\mathcal{D}$ has substantive consequences both for the form of $\mu^\ast$ and for the operational interpretability of the multipliers conjugate to the constraints; this subsection records the considerations that led us to KL.

Many alternative choices for $\mathcal{D}$ are available --- reverse KL, the broader $f$-divergence family ($\chi^2$, Hellinger, total variation), Wasserstein-$p$, cross-entropy alone, and Approximate-Bayesian-Computation surrogates --- each yielding different forms for $\mu^\ast$, different parameterizations of the conjugate multipliers, and different sampling algorithms~\cite{hansen1982gmm,kitamura1997information,deville1992calibration,zubizarreta2015stable,otto2001convergence,jordan1998variational,cover2006elements,marin2012approximate}. We adopt KL throughout. The decisive consideration is the interpretability of the calibrated kernel and its multipliers. With KL as the fidelity functional, the optimum is
\begin{equation}
\Kstar(y\mid x) \;\propto\; \Kref(y\mid x)\,\exp\Bigl(\sum_j \lambda_j^\ast f_j(y)\Bigr),
\label{eq:exp-tilt-discussion}
\end{equation}
the exponential family generated by $\{f_j\}$ over the reference $\Kref$, with the constraint targets $c_j$ as expectation parameters. KL is the unique divergence (up to multiplicative scaling) that produces this canonical-form parameterization: $\lambda^\ast$ are the natural parameters, $\E[f_j]$ are the dual mean parameters, and the partition function $\log Z(\lambda)$ is convex in $\lambda$ and convex-conjugate to KL itself. Three properties follow directly from this structure.

\begin{enumerate}
\item \emph{Decision-relevant multiplier magnitudes.} Each $\lambda_j^\ast$ measures the strength of the perturbation that constraint $j$ imposes on $\Kref$. Equivalently, $\lambda_j^\ast=\partial\log Z/\partial c_j$ at the optimum, so the multiplier is the marginal cost (in fidelity) of moving the constraint target by a unit. Large $|\lambda_j^\ast|$ flags constraints that pull $\Kref$ far; small values flag constraints the reference already satisfies. Stakeholders can read the multiplier vector as a contribution-by-constraint audit of the calibrated prediction, which is otherwise an opaque function of a black-box simulator.

\item \emph{Additive composition of constraints.} The exponent is linear in $(\lambda,f)$, so adding a new aggregate target appends a term to the exponent and a multiplier to the parameter vector without re-deriving anything; previously calibrated multipliers retain their interpretation in the presence of new ones (subject to the orthogonality discussion in \S\ref{sec:composition}). Soft and hard regimes connect smoothly via $\rho_j\to\infty$~\cite{dudik2007maxent,nghiem2026maxent}.

\item \emph{Reference-density cancellation in MCMC.} The exponential tilt~\eqref{eq:exp-tilt-discussion} is precisely the structure that makes the $\Kref$-cancellation of Theorem~\ref{thm:cancellation} go through: the reference density appears identically in the proposal and the target and never needs to be evaluated. None of the alternative fidelities considered above (Wasserstein, $\chi^2$, reverse KL) preserves this cancellation against a generic black-box generative simulator.
\end{enumerate}

These three properties --- direct readability of $\lambda^\ast$, additive composability, and black-box-compatible sampling --- are what we mean by KL's interpretability: fidelity and calibrated kernel are parameterized by the same exponential family, with multipliers that decision-makers can audit.

\subsection{Selecting computational realizations of the I-projections}
\label{sec:realization}
Stage~1 and Stage~2 are both instances of the same $I$-projection problem~\cite{csiszar1975idivergence} and each admits two interchangeable computational realizations: solve the dual in the multipliers $\boldsymbol\lambda$ by convex optimization, or sample from the projected exponential tilt by Metropolis--Hastings. The choice between them is dictated in part by the interfaces the underlying generative model exposes, and in part by downstream use cases.

The convex-optimization route requires the dual gradient $\partial\Lambda/\partial\lambda_j=\E_{\mu_\lambda}[f_j]-c_j$ to be computable. With an empirical reference $\widehat\mu_0=\tfrac{1}{N}\sum_i\delta_{x_i}$ this expectation reduces to a weighted average over $N$ particles (Stage~1 weights from~\eqref{eq:stage1-empirical}); the dual is unconstrained, smooth, and convex, and converges in a small number of Newton or quasi-Newton iterations against a one-time particle-draw cost. The MCMC route requires only the ability to sample from $\mu_0$ (and, in our case, to evaluate the constraint functions $f_j$ at the proposed sample); the reference density itself need not be evaluated at all once the cancellation of Theorem~\ref{thm:cancellation} applies. Cost is dominated by the chain length needed for convergence.

In the case study, the marginal $\Qref$ admits direct sampling and the eligibility-filtered draws form an empirical reference cohort large enough to richly interpolate both trial populations but small enough for the closed-form Stage~1 dual; therefore we compute Stage~1 by convex optimization. The upshot of this choice is that we obtain a synthetic cohort of \emph{weighted} samples.  Some use cases may find incorporation of sample weights conceptually cumbersome. For example, if one wants a database of realistic synthetic patient records, attaching a sample weight to each is undesirable.  One might opt to use the MH sampler to generate baseline samples if that were the case, and the model supported it.

The choice we make to use MH sampling in stage~2 is partially motivated by a similar desire to not generate synthetic data with weights compounded from two reweighting procedures, and partially from the computational features of the underlying generative model.  In our case the conditional kernel $\Kref(y\mid x)$ is the output of an autoregressive transformer whose density evaluation is expensive relative to sampling; we compute Stage~2 by MH against $\Kref$ as the proposal, exploiting the cancellation theorem so that the density evaluations are only required to be generated once at the scale of the stage~1 dataset.  This makes stage~2 relatively efficient on modern hardware.

\subsection{2-stage versus joint composition}
\label{sec:composition}
FRESH as described composes two $I$-projections: Stage~1 projects $\Qref$ onto $\mathcal{C}_Q$ to produce $\Qstar$, and Stage~2 projects $\Kref$ onto the outcome-constraint set at fixed $\Qstar$ to produce $\Kstar$. An evident alternative --- which we call \emph{joint FRESH} --- replaces this composition with a single $I$-projection of the generative joint $\Qref\,\Kref$ onto $\mathcal{C}_Q\cap\mathcal{C}_K$. Both procedures yield calibrated joint distributions that satisfy all constraints, and both are grounded in Csisz\'ar's $I$-projection theory~\cite{csiszar1975idivergence}. Neither is a-priori correct or incorrect; they return different objects and satisfy different guarantees. We adopt the 2-stage composition as the FRESH default for its simple transferability, while noting that joint FRESH followed by a downstream propensity adjustment is a viable alternative workflow.

\subsubsection*{The two procedures return different X-marginals}
By the chain rule of relative entropy,
\begin{equation}
\KL(P\,\|\,g\,\Kref)\;=\;\KL(P_X\,\|\,g)\;+\;\E_{P_X}\!\bigl[\KL\bigl(P_{Y\mid X}\,\|\,\Kref(\cdot\mid X)\bigr)\bigr].
\label{eq:joint-decomp}
\end{equation}
The 2-stage calibrated X-marginal is
\[
\Qstar(X) \;=\; \argmin_{q\in\mathcal{C}_Q}\,\KL(q\,\|\,g);
\]
the joint procedure's calibrated X-marginal is the marginal of the joint optimum,
\[
P^\ast(X) \;=\; \int P^\ast(X,Y)\,dY,\qquad P^\ast \;=\; \argmin_{P\in\mathcal{C}_Q\cap\mathcal{C}_K}\,\KL(P\,\|\,g\,\Kref).
\]
Both satisfy $\mathcal{C}_Q$; they differ in selection criterion. $\Qstar$ is the closest-to-$\Qref$ member of $\mathcal{C}_Q$. $P^\ast(X)$ is the X-marginal of whichever joint trade-off in~\eqref{eq:joint-decomp} is cheapest in total $\KL$ --- in particular, joint FRESH may pay $\KL(P_X\|g)$ in order to reduce the conditional term, when an alternative baseline distribution makes outcome targets cheaper to satisfy. Whenever $\mathcal{C}_Q$ leaves shape freedom in the X-marginal and shifting that shape can reduce conditional cost, the two selection criteria pick different members of $\mathcal{C}_Q$ and $\Qstar\neq P^\ast(X)$.

\subsubsection*{When the two procedures coincide}
2-stage and joint $I$-projections coincide precisely when the two constraint sets are \emph{orthogonal} in the information-geometric sense, i.e.\ when the I-projection onto $\mathcal{C}_Q$ already lies in the linear family carved out by $\mathcal{C}_K$. In FRESH this requires $\mathcal{C}_K$ to be expressible as a condition on the conditional kernel alone, independent of $P_X$. Outcome moments of the form $\E_{X\sim \Qstar,\,Y\sim K(\cdot\mid X)}[f_j(Y)]=c_j$ couple $P_X$ and $P_{Y\mid X}$ through the chain rule, so generic outcome targets break this orthogonality. A clean special case in which the procedures agree is a per-stratum constraint $\E_{Y\sim K(\cdot\mid X=x)}[f_j(Y)]=c_j(x)$, which constrains $K$ pointwise in $x$ without coupling to $P_X$; FRESH's typical landmark constraints are not of this form. A further case of automatic agreement is one in which $\mathcal{C}_Q$ pins $P_X$ exactly (e.g.\ a single moment constraint on a binary $X$), since then there is no shape freedom for the joint procedure to exploit.

\subsubsection*{2-stage benefit: composability across analyses}
Many decision-relevant uses of FRESH involve calibrating to several aggregate sources and combining or comparing the resulting patient-level distributions: indirect treatment comparisons, biomarker-stratum analyses, scenario evaluation across regimens. A natural anchor for such analyses is a common patient-level cohort with a shared X-marginal. The 2-stage composition delivers this directly: $\Qstar$ depends only on $(\mathcal{X}_{\textsc{ie}},\{(\phi_k,b_k)\})$, so two sources reporting compatible eligibility and baseline statistics yield the same $\Qstar$, and the regimens differ only in $\Kstar$. Joint FRESH does not have this property: each source's $P^\ast(X)$ depends on its own outcome targets, so two regimens calibrated independently will yield different X-marginals even when their reported baseline statistics agree exactly.

\paragraph{An alternative workflow with joint FRESH}
The composability advantage has a viable alternative for users who prefer joint FRESH: simulate two trials independently as joint $I$-projections, then bring them onto a common reference cohort downstream by propensity-score methods on the simulated patient-level samples. Because $P^\ast(X)$ matches the source's reported \emph{baseline summary statistics} but not necessarily the joint shape of all measured covariates, two such simulations can yield differently-shaped X-marginals on covariates outside $\{\phi_k\}$ even when the constraint statistics agree exactly.  This suggests that in this setting a practitioner ought to apply propensity tooling (matching, IPW, overlap weights, etc.\cite{rosenbaum1983propensity}) to ensure critical covariate balance. This workflow is in some respects more flexible than the 2-stage anchor, since it defers the choice of reference cohort to the analysis stage rather than committing to it at calibration time. On the other hand the 2-stage approach delivers composability built-in because the baseline distributions in the 2-stage comparison are equal by construction, while the joint-plus-propensity workflow requires a separately-justified propensity model and overlap diagnostics on each pair of simulations compared.

\subsection{Relation to MAIC, STC, and ML-NMR}
\label{sec:maic-comparison}

The demonstrated head-to-head simulation enabled by \textsc{FRESH} sits in the family of indirect-comparison methods that combine patient-level information with aggregate trial summaries, and shares its baseline-reweighting machinery with MAIC: Stage~1 entropy balancing and the case-study cross-trial step (\S\ref{sec:case-study-eb}) are both instances of (generalized maximum) entropy balancing on a particle cohort against published baseline statistics. The genuine departure from MAIC, STC, and ML-NMR is in what supplies the patient-level information. MAIC and STC~\cite{caro2010stc} use real trial IPD on at least one arm, calibrating it to a comparator population that has only aggregate summaries; ML-NMR fits a Bayesian regression that combines IPD from one or more trials with aggregate data from others~\cite{phillippo2020mlnmr}. \textsc{FRESH} replaces the trial IPD with a generative model fit on a separate real-world-data source and calibrates its baseline marginal $\Qref$ (Stage~1) and outcome conditional $\Kref$ (Stage~2) independently to each trial's published baseline and outcome moments. Patient-level information is still present --- it lives implicitly in the generative model --- but it is one step removed from any specific trial cohort: the model summarizes a separate, broader patient population, and is brought into alignment with each trial's reported aggregates by the two $I$-projections.

The exchangeability assumption shifts accordingly. MAIC asks that trial-$A$ IPD reweighted to match $B$'s baseline be exchangeable with $B$'s IPD on the prognostic covariates~\cite{signorovitch2010maic,phillippo2018maic}; \textsc{FRESH} asks that the generative joint $\Qref\,\Kref$ be a faithful surrogate for the trial population conditional on the calibration targets. The trade-off has both directions. The generative model amortizes across many indirect comparisons in the same disease area (one model, many trial calibrations); the synthetic patient-level cohort supports any downstream estimand at full IPD-level granularity, including comparisons the published trials never tabulated.  Conversely, Stage~2's outcome calibration uses landmark-marginal targets, so the conditional kernel $\Kstar$ inherits whatever covariate-dependence structure $\Kref$ encodes --- STC, by fitting outcome regressions directly on real IPD, can identify richer conditional structure when IPD is available --- and misspecification of $\Qref$ or $\Kref$ propagates through both stages and is not corrected by the calibration. Sensitivity to the generative-model architecture and training corpus is therefore a first-class concern (cf.\ \S\ref{sec:realization}).

Following NICE DSU TSD~18~\cite{phillippo2016tsd} reporting recommendations, adapted to the FRESH setting: weight diagnostics for both Stage-1 reweights and the cross-trial Stage-2 EB are in Appendix~\ref{app:weight-diagnostics}; the propagated-uncertainty CI on the cross-trial contrast is in Table~\ref{tab:mpanc-eb-results} with the bootstrap protocol in \S\ref{sec:case-study-error-quant}.

\bibliographystyle{plainnat}
\bibliography{fresh_references}

\appendix
\section{Proofs}
\label{app:proofs}

This appendix collects proofs of the propositions and theorems requiring derivation.

\subsection*{Proofs}

\begin{proof}[Proof of Proposition~\ref{prop:cond-iproj}]
\textit{Standing regularity assumption.} We assume that $\mathcal{C}$ (defined below) is non-empty and contains at least one $\mu_0$-absolutely-continuous probability measure of finite KL divergence to $\mu_0$, and that the outcome statistics $\{f_j\}_{j\in\mathcal{H}\cup\mathcal{S}}$ admit finite exponential moments under $\mu_0$ in a neighborhood of the dual optimizer. These are the standard regularity conditions for the convex-duality results invoked below.

\textit{Joint reformulation via the chain rule.} For a conditional kernel $K$, let $\mu(dx,dy) = \Qstar(dx)\,K(dy\mid x)$ and $\mu_0(dx,dy) = \Qstar(dx)\,\Kref(dy\mid x)$. The chain rule of relative entropy~\cite{cover2006elements} gives
\[
\KL(\mu\|\mu_0) \;=\; \KL(\mu_X\|\Qstar) \;+\; \E_{\Qstar}\!\bigl[\KL\bigl(K(\cdot\mid X)\,\big\|\,\Kref(\cdot\mid X)\bigr)\bigr]\;=\;\KL(K\|\Kref;\Qstar),
\]
where the first term vanishes since $\mu_X=\Qstar$ by construction. The outcome moment $\E_\mu[f_j(Y)]$ equals $\E_{X\sim\Qstar,Y\sim K(\cdot\mid X)}[f_j(Y)]$, so problem~\eqref{eq:stage2-iproj} is equivalent to the constrained joint problem
\begin{equation}
\min_{\mu\in\mathcal{C}}\;\KL(\mu\|\mu_0) \;+\;\tfrac{1}{2}\!\sum_{j\in\mathcal{S}}\!\rho_j\bigl(\E_\mu[f_j(Y)]-c_j\bigr)^2,
\qquad
\mathcal{C} = \bigl\{\mu : \mu_X=\Qstar,\ \E_\mu[f_j(Y)]=c_j\ \forall j\in\mathcal{H}\bigr\},
\label{eq:stage2-joint}
\end{equation}
on the closed convex set $\mathcal{C}\subset\mathcal P(\X\times\Y)$ (an affine slice intersected with finitely many affine outcome hyperplanes).

\textit{Soft block via Fenchel duality.} For each $j\in\mathcal{S}$, the convex-conjugate identity
\[
\tfrac{\rho_j}{2}(s-c_j)^2 \;=\; \sup_{\lambda_j\in\R}\bigl[\lambda_j(c_j-s) - \lambda_j^2/(2\rho_j)\bigr],\qquad s=\E_\mu[f_j(Y)]
\]
(with optimizer $\lambda_j^\ast = -\rho_j(s-c_j)$, matching the sign convention of \S\ref{sec:info-geom}) recasts~\eqref{eq:stage2-joint} as the min--max
\[
\inf_{\mu\in\mathcal{C}}\;\sup_{\lambda_{\mathcal{S}}\in\R^{|\mathcal{S}|}}\;\Bigl[\KL(\mu\|\mu_0) \;-\;\textstyle\sum_{j\in\mathcal{S}}\lambda_j\bigl(\E_\mu[f_j]-c_j\bigr) \;-\;\sum_{j\in\mathcal{S}}\lambda_j^2/(2\rho_j)\Bigr].
\]
The integrand is convex in $\mu$ (strictly so, from KL), concave in $\lambda_{\mathcal{S}}$ (strictly so, from the quadratic), lower-semicontinuous in $\mu$, and continuous in $\lambda_{\mathcal{S}}$ with coercive penalty $-\lambda_j^2/(2\rho_j)$; under the regularity assumption above, Fenchel--Rockafellar duality gives strong duality and licenses the inf--sup interchange.

\textit{Inner problem via Léonard's $I$-projection.} For fixed $\lambda_{\mathcal{S}}$, the inner minimization is entropy minimization over $\mathcal{C}$ with a linear-in-$\mu$ shift of the objective by $-\sum_{\mathcal{S}}\lambda_j\E_\mu[f_j]$ --- i.e., an $I$-projection of a tilted reference onto the closed convex constraint set $\mathcal{C}$. Léonard's theorem on the minimization of entropy functionals over arbitrary closed convex constraint sets~\cite{leonard2008minimization} gives existence, uniqueness, and the exponential-tilt form of the inner minimizer under a weak constraint qualification, which is exactly our standing regularity assumption above. Writing the dual multiplier for the marginal constraint $\mu_X=\Qstar$ as a function $\eta:\X\to\R$, and the dual multipliers for the outcome equalities $j\in\mathcal{H}$ as $\lambda_j\in\R$, the inner minimizer takes the form
\[
\frac{d\mu^\ast}{d\mu_0}(x,y)\;\propto\;\exp\!\Bigl(\textstyle\sum_{j\in\mathcal{H}\cup\mathcal{S}}\lambda_j^\ast f_j(y) \;+\; \eta(x)\Bigr),
\]
where the soft multipliers $\lambda_j^\ast$ ($j\in\mathcal{S}$) are inherited directly from the Fenchel reformulation.

\textit{Pinning $\eta(x)$ from the marginal.} Integrating in $y$, $\mu^\ast_X(dx) \propto \Qstar(dx)\,e^{\eta(x)}\,Z(\lambda^\ast\mid x)$ with $Z(\lambda\mid x)$ as in~\eqref{eq:partition-fn}; imposing $\mu^\ast_X=\Qstar$ pins $e^{\eta(x)} = 1/Z(\lambda^\ast\mid x)$. Substituting back, $\mu^\ast(dx,dy) = \Qstar(dx)\,\Kstar(dy\mid x)$ with $\Kstar$ given by~\eqref{eq:expt-kernel}.

\textit{Outer KKT.} The hard multipliers $(\lambda_j^\ast)_{j\in\mathcal{H}}$ are pinned by $\E_{\mu^\ast}[f_j]=c_j$, which equals $\E_{X\sim\Qstar,Y\sim\Kstar(\cdot\mid X)}[f_j(Y)]=c_j$ since $\mu^\ast_X=\Qstar$. The soft multipliers $(\lambda_j^\ast)_{j\in\mathcal{S}}$ satisfy the Tikhonov relation $\lambda_j^\ast = -\rho_j\bigl(\E_{X\sim\Qstar,Y\sim\Kstar(\cdot\mid X)}[f_j(Y)] - c_j\bigr)$, from the optimizer of the Fenchel identity above.
\end{proof}

\begin{proof}[Proof of Theorem~\ref{thm:cancellation}]
The mask $S$ is drawn from a distribution depending only on the fixed particle weights $\{w_i\}$, not on $\mathbf{y}$ or $\tilde{\mathbf{y}}$, so its probability cancels between numerator and denominator of the MH acceptance ratio and we condition on $S$ throughout. The remaining acceptance ratio is
\[
\frac{\pi(\tilde{\mathbf{y}}\mid\boldsymbol\lambda)\,q(\mathbf{y}\mid\tilde{\mathbf{y}},S)}{\pi(\mathbf{y}\mid\boldsymbol\lambda)\,q(\tilde{\mathbf{y}}\mid\mathbf{y},S)}.
\]
Using~\eqref{eq:target} and noting $\tilde y_i=y_i$ for $i\notin S$, the ratio of targets reduces to
\[
\prod_{i\in S}\frac{\Kref(\tilde y_i\mid x_i)}{\Kref(y_i\mid x_i)}\;\exp\Bigl(\sum_{i\in S}\sum_j\lambda_j[f_j^\varepsilon(\tilde y_i)-f_j^\varepsilon(y_i)]\Bigr).
\]
The reverse-proposal ratio is
\[
\frac{q(\mathbf{y}\mid\tilde{\mathbf{y}},S)}{q(\tilde{\mathbf{y}}\mid\mathbf{y},S)}\;=\;\prod_{i\in S}\frac{\Kref(y_i\mid x_i)}{\Kref(\tilde y_i\mid x_i)}.
\]
The product of these two cancels the $\Kref$ factors and leaves only the exponential of $\Delta$.
\end{proof}

\section{Case-study details}
\label{app:case-study-details}

\subsection{Trial inclusion / exclusion specifications}
\label{app:ie-specs}

The two trial-specific eligibility sets $\X_{\textsc{ie}}^{\textsc{mpact}}$ and $\X_{\textsc{ie}}^{\textsc{prodige4}}$ that enter Stage~1 of the case study (\S\ref{sec:case-study}) are defined by the published protocols of Von Hoff~et~al.\ (NCT00844649)~\cite{vonhoff2013mpact} and Conroy~et~al.\ (NCT00112658)~\cite{conroy2011folfirinox}. Table~\ref{tab:mpanc-ie-specs} lists those criteria from each protocol that are \emph{evaluable in our simulation} --- specifically, criteria expressible in terms of baseline covariates represented in the real-world clinico-genomic dataset on which the generative model is trained. The two specifications are nearly identical on this evaluable subset; their differences sit on a handful of performance-status and age bounds. Several criteria from each protocol --- e.g.\ RECIST measurability, lab-derived organ function, surgical history, pregnancy status, prior-malignancy history --- are absent from the dataset's covariate schema and are therefore not enforceable by the eligibility filter applied to baseline draws from $\Qref$; they are omitted from the table. The treatment-arm regimen filters (gemcitabine alone for the control arm of both trials; \textsc{GN} on MPACT vs.\ \textsc{FX} on PRODIGE-4 for the experimental arm) are listed separately in Table~\ref{tab:mpanc-baseline-targets}.

\begin{table}[H]
\centering\small
\setlength{\tabcolsep}{6pt}
\renewcommand{\arraystretch}{1.15}
\begin{tabular}{@{}l l@{}}
\toprule
Criterion & Threshold / scope \\
\midrule
\multicolumn{2}{l}{\textbf{Inclusion --- common to both trials} (AND-combined)} \\
\quad Indication                                  & \texttt{ONCOCODE}~$=$~PAAD (pancreatic adenocarcinoma) \\
\quad Stage                                       & TNM stage~4 (metastatic) \\
\quad Prior systemic therapy                      & first-line only \\
\quad Metastatic location known                   & not null \\
\midrule
\multicolumn{2}{l}{\textbf{Inclusion --- trial-specific}} \\
\quad ECOG performance status, MPACT$^{\dag}$     & 0--2 \\
\quad ECOG performance status, PRODIGE-4          & 0--1 \\
\quad Age (years), MPACT                          & $\ge 18$ \\
\quad Age (years), PRODIGE-4                      & $18$--$75$ \\
\midrule
\multicolumn{2}{l}{\textbf{Exclusion --- common to both trials} (OR-combined)} \\
\quad Brain metastases                            & excluded \\
\quad Non-adenocarcinoma pancreatic histology     & PANET, PANEC (also PAAC for PRODIGE-4) \\
\bottomrule
\end{tabular}
\caption{Inclusion / exclusion criteria from each trial's published protocol that are evaluable against the simulator's baseline covariate schema; together they define the eligibility sets $\X_{\textsc{ie}}^{\textsc{mpact}}$ and $\X_{\textsc{ie}}^{\textsc{prodige4}}$ supplied to Stage~1 of \textsc{FRESH}. The MPACT protocol specifies KPS~$\ge 60$, equivalent under the empirical mapping~\cite{ma2010interconversion,prasad2018kps} (ECOG~0~$=$~KPS~100, ECOG~1~$=$~KPS~80--90, ECOG~2~$=$~KPS~60--70) to ECOG~$\le 2$. The published Stage~1 ECOG proportions in Table~\ref{tab:mpanc-baseline-targets} correspond to the mapped trial-reported KPS distribution.}
\label{tab:mpanc-ie-specs}
\end{table}

\subsection{MH sampler hyperparameters}
\label{app:mh-hyperparameters}

Table~\ref{tab:mh-hyperparameters} lists the numerical hyperparameter settings supplied to Stage~2 of Algorithm~\ref{alg:fresh} for the case-study runs of \S\ref{sec:case-study}. The same values are used for both trial arms, and are inherited unchanged when Stage~2 is re-invoked under the warm-started fan-out of \S\ref{sec:case-study-error-quant}.

\begin{table}[H]
\centering\small
\setlength{\tabcolsep}{8pt}
\renewcommand{\arraystretch}{1.15}
\begin{tabular}{@{}l l l@{}}
\toprule
Symbol & Meaning & Value \\
\midrule
$\alpha$         & proposal rate (Bernoulli per-particle inclusion)                          & $10^{-3}$  \\
$\varepsilon$    & sigmoid scale on quantile constraints (Eq.~\eqref{eq:sigmoid-surrogate}) & $10$~days  \\
$\gamma_0$       & Robbins--Monro polynomial schedule scale $\gamma_t=\gamma_0/(t_0+t)^a$ (Eq.~\eqref{eq:rm}) & $1.0$ \\
$a$              & Robbins--Monro polynomial decay exponent                                  & $0.6$      \\
$t_0$            & Robbins--Monro polynomial schedule offset                                 & $1$        \\
$P$              & data-parallel cohort partitions                                           & $400$      \\
$\theta$         & convergence threshold on aggregate constraint violation                   & $50$       \\
$T_{\max}$       & maximum chain iterations                                                  & $10^{6}$   \\
\bottomrule
\end{tabular}
\caption{Stage~2 MH sampler hyperparameters used in the case study (\S\ref{sec:case-study}); values shared between the two trial arms and across all bootstrap-replicate fan-out chains.}
\label{tab:mh-hyperparameters}
\end{table}

\subsection{Weight diagnostics for the entropy-balancing steps}
\label{app:weight-diagnostics}

Table~\ref{tab:weight-diagnostics} reports the diagnostics for the three entropy-balancing steps in the case study --- the two per-trial Stage-1 reweights and the cross-trial Stage-2 balance --- and Table~\ref{tab:weight-quantiles} reports the quantile distribution of $w/\bar w$. ESS is defined as $(\sum_i w_i)^2/\sum_i w_i^2$.

\begin{table}[H]
\centering\small
\setlength{\tabcolsep}{4pt}
\renewcommand{\arraystretch}{1.2}
\begin{tabular}{@{}l r r r r r r@{}}
\toprule
Step & $N$ & ESS & ESS/$N$ & $w_{\max}/\bar w$ & top-$5\%$ & top-$10\%$ \\
\midrule
Stage~1 MPACT (per-trial baseline)            & $282{,}092$ & $127{,}742$ & $0.45$ & $3.61$ & $17.1\%$ & $32.2\%$ \\
Stage~1 PRODIGE-4 (per-trial baseline)        & $175{,}782$ & $151{,}513$ & $0.86$ & $1.73$ & $\phantom{0}8.6\%$  & $17.1\%$ \\
Stage~2 MPACT$\to$PRODIGE-4 (cross-trial EB)  & $171{,}562$ & $144{,}727$ & $0.84$ & $1.91$ & $\phantom{0}9.4\%$  & $17.8\%$ \\
\bottomrule
\end{tabular}
\caption{Weight diagnostics for the three entropy-balancing steps in the case study. The right two columns give the share of total weight carried by the top-$5\%$ and top-$10\%$ of rows. Uniform weights would give $\mathrm{ESS}/N=1$, $w_{\max}/\bar w=1$, and a top-$5\%$ ($10\%$) share of $5\%$ ($10\%$).}
\label{tab:weight-diagnostics}
\end{table}

\begin{table}[H]
\centering\small
\setlength{\tabcolsep}{6pt}
\renewcommand{\arraystretch}{1.2}
\begin{tabular}{@{}l r r r r r r r@{}}
\toprule
Step & $q_{01}$ & $q_{05}$ & $q_{25}$ & $q_{50}$ & $q_{75}$ & $q_{95}$ & $q_{99}$ \\
\midrule
Stage~1 MPACT             & $0.001$ & $0.001$ & $0.019$ & $0.643$ & $1.463$ & $3.109$ & $3.611$ \\
Stage~1 PRODIGE-4         & $0.471$ & $0.471$ & $0.570$ & $1.062$ & $1.222$ & $1.715$ & $1.732$ \\
Stage~2 EB                & $0.302$ & $0.302$ & $0.785$ & $0.939$ & $1.275$ & $1.766$ & $1.913$ \\
\bottomrule
\end{tabular}
\caption{Quantiles of the normalized weight $w/\bar w$ for each entropy-balancing step. Uniform weights would have all quantiles at $1.0$.}
\label{tab:weight-quantiles}
\end{table}

The cross-trial Stage-2 step --- the one most directly affecting the CIs in Table~\ref{tab:mpanc-eb-results} --- has $\mathrm{ESS}/N=0.84$ and $w_{\max}/\bar w=1.91$, comfortably within standard MAIC tolerances; Stage-1 PRODIGE-4 is similarly well-behaved. Stage-1 MPACT's lower nominal $\mathrm{ESS}/N=0.45$ is driven by a long \emph{left} tail of near-zero weights ($q_{01}/q_{50}\approx 0.002$) on samples that fail eligibility criteria, not by concentration on a small fraction of high-influence rows ($q_{99}/q_{50}\approx 5.6$, top-$5\%$ share $17\%$).

\subsection{MH sampler convergence diagnostics}
\label{app:mh-diagnostics}

We report standard adaptive-MCMC diagnostics for the two reference Stage-2 MH chains used in the case study. Acceptance rates over the run, multi-chain Gelman--Rubin $\widehat R$ statistics~\cite{gelman1992inference} on the per-constraint adapted multipliers, and achieved-vs-target deviations evidence are summarized in Table~\ref{tab:mh-diagnostics}.

\begin{table}[H]
\centering\small
\setlength{\tabcolsep}{6pt}
\renewcommand{\arraystretch}{1.2}
\begin{tabular}{@{}l r r@{}}
\toprule
Diagnostic & MPACT (\textsc{GN}) & PRODIGE-4 (\textsc{FX}) \\
\midrule
Acceptance rate at stop                          & $0.70$    & $0.57$    \\
Acceptance rate min/max over run                 & $0.67/0.74$ & $0.53/0.60$ \\
Max $\widehat R$ across constraints (multi-chain Gelman--Rubin) & $0.99$    & $0.99$    \\
Min / max of $\lambda_j^\ast$ across constraints at stop & $-0.46\,/\,+0.81$ & $-0.32\,/\,+0.98$ \\
Max post-burn soft-violation                     & $0.0016$  & $0.0020$  \\
Max achieved-vs-target landmark deviation (days) & $5.27$    & $5.62$    \\
Stopping iteration $T_{\text{stop}}$             & $31{,}000$ & $31{,}000$ \\
\bottomrule
\end{tabular}
\caption{MH-sampler diagnostics for the Stage-2 chains in each trial arm of the case study. Acceptance rates are computed over a moving window of MH steps. Multi-chain Gelman--Rubin $\widehat R$ is computed across the $P=400$ data-parallel partitions of the cohort, on the running estimate $\widehat g_j^{(t)}$ of each constraint statistic; values $\widehat R<1.05$ indicate that the per-partition chains have mixed to a common stationary distribution. Soft-violation entries refer to the missingness Tikhonov term (\S\ref{sec:case-study}); landmark deviations are residuals between the achieved KM landmark survival and the published target. The stopping rule combines $\widehat R$, the max soft-violation, and a constraint-residual threshold; both case-study chains hit the rule well within $T_{\max}$.}
\label{tab:mh-diagnostics}
\end{table}

The downstream bootstrap uncertainty in Table~\ref{tab:mpanc-eb-results} is computed by re-running the full pipeline across $B_\text{source}\times B_\text{target}=10{,}000$ replicate pairs (\S\ref{sec:case-study-error-quant}), which captures end-to-end variability.

\end{document}